\newtheorem{theorem}{Theorem}
\newtheorem{corollary}{Corollary}
\newtheorem{definition}{Definition}
\newtheorem{proposition}{Proposition}
\newtheorem{remark}{Remark}
\theoremstyle{definition}
\newtheorem{example}{Example}
\begin{document}

\title{Unconditionally superposition-robust entangled state in all multiparty quantum systems}

\author{Swati Choudhary}
\affiliation{Harish-Chandra Research Institute,  A CI of Homi Bhabha National Institute, Chhatnag Road, Jhunsi, Prayagraj  211 019, India}
\affiliation{Center for Quantum Science and Technology (CQST) and  
Center for Computational Natural Sciences and Bioinformatics (CCNSB),
International Institute of Information Technology Hyderabad, Prof. CR Rao Road, Gachibowli, Hyderabad 500 032, Telangana, India}

\author{Ujjwal Sen}
\affiliation{Harish-Chandra Research Institute,  A CI of Homi Bhabha National Institute, Chhatnag Road, Jhunsi, Prayagraj  211 019, India}

\author{Saronath Halder}
\affiliation{Harish-Chandra Research Institute,  A CI of Homi Bhabha National Institute, Chhatnag Road, Jhunsi, Prayagraj  211 019, India}
\affiliation{Center for Theoretical Physics, Polish Academy of Sciences, Aleja Lotnik\'{o}w 32/46, 02-668 Warsaw, Poland}

\begin{abstract}
We investigate the inseparability of states generated by superposition of a multipartite pure entangled state with a product state. In particular, we identify specific multipartite entangled states that will always produce inseparability after superposition with an arbitrary completely product state. Thus, these entangled states are unconditionally superposition-robust, and we refer to this phenomenon as ``unconditional inseparability of superposition" in multipartite quantum systems. In this way, we complete the picture of unconditional inseparability of superposition which was introduced earlier for bipartite systems. We also characterize the superposition of a multipartite pure entangled state with a pure bi-separable state. However, the present analysis allows us to obtain a more general result, viz., there exists at least one pure entangled state in any multipartite Hilbert space such that its superposition with an arbitrary completely product state always yields an entangled state. On the other hand, for any bipartite pure entangled state, if at least one of the subsystems is a qubit, then it is always possible to find a suitable product state such that the superposition of the entangled state and the product state produces a product state. In this way, we find a feature of multipartite quantum systems which is in sharp contrast with that of bipartite ones. We then show how unconditional inseparability of superposition can be useful in exhibiting an indistinguishability property within the local unambiguous state discrimination problem.
\end{abstract}
\maketitle

\section{Introduction}\label{sec:Introduction}
The superposition principle is a fundamental concept in quantum theory \cite{aberg2006,TMM2014, AGM2017}. It serves as a prerequisite for the manifestation of another fundamental concept, viz., entanglement \cite{RPMK2009, GUHNE2009, STMAAU17}, a distinct quantum resource. This resource offers significant advantages in various information processing tasks such as quantum teleportation \cite{BBCJPW1993, MDMV1999, MJT2004}, dense coding \cite{CS1994, AA1994, KHPN1996, DGMCAU2004, RAAU2013, RAU2013}, quantum key distribution \cite{A1991, CG2014, S2022} etc. Clearly, given such advantages in information processing tasks, characterization of entanglement is quite necessary to understand the maximum utility of such a resource. Furthermore, a study which involves both concepts -- the superposition principle and entanglement -- is potentially significant. See Refs.~\cite{AUHMG2015, EC2016,HZZSV2017,AU2021,ASU2023} and \cite{ARARCPKDA2007,ZGZ2010,D2013,NFM2016,AESMAM2016,XTF2017,KH2018,LAJSRZCXC2018,HML2018,YTZXS2019,TSM2020,LMJY2021} in this regard. 

In the presence of noise, an efficient way of characterizing  entanglement is described as the following. We usually consider convex combination of an entangled state and a separable state. Then, we check up to what extent the entangled state can tolerate noise, i.e., what are the instances for which the state after convex combination remains entangled. This is referred to as the study of robustness of entanglement \cite{Vidal1999}. Interestingly, if we consider a convex combination of an arbitrary pair of a pure bipartite entangled state and a product state then, the resultant state is always entangled \cite{HORODECKI2003589}; provided that the probability corresponding to the entangled state is nonzero. This gives rise to an instance of \textit{unconditional robustness of entanglement} -- any pure entangled state is unconditionally robust in the direction of any product state. However, one can easily extend this to the multipartite scenario. For this, we can provide the following reason. Any pure entangled state in a multipartite system is entangled in at least one bipartition. Then, we focus on that bipartition and take a convex combination of the entangled state with an arbitrary completely product state (such a state is product across every bipartition), provided the probability corresponding to the entangled state is nonzero. In this way, the problem reduces to the bipartite case. Consequently, this convex combination results in an entangled state. 

An almost analogous concept to ``unconditional robustness of entanglement'' is ``unconditional inseparability of superposition'', introduced in \cite{sep+ent} considering bipartite entangled states. In the latter case, the relevant question is the following: Under what conditions, the superposition of a pure entangled state and a product state always produces an entangled state? Here, the properties of the entangled state might be given but the form of the product state is not known -- which means that we are checking if the entangled state with the given properties can be unconditionally superposition-robust in the direction of any product state. The nontrivial nature of this exploration is underlined by the fact that  unlike convex combinations, linear combinations (i.e., superpositions) of any pure entangled state and an arbitrary product state cannot always produce an entangled state, provided that the combination is a nontrivial one. We mention that in a ``nontrivial'' linear combination (superposition), the coefficients corresponding to the states are nonzero. There are several articles where superposition of quantum states is considered and then, the researchers have provided bounds on different entanglement measures both in bipartite and in multipartite systems \cite{LPS2006, UH2007, CXH2007, JN2007, G2007, DMA2007, WNZ2007, GA2008, AJA2008, GHIU2009, KCH2010, ZZF2010, S2011, PS2011, AAD2011, ZZS2014, CTD2016, QCFXFQ2018, SLZ2019, TDC2019}. However, instances where these important quantitative studies fail, it may still be possible, via independent methods, to know whether the output of the superposition is entangled or not. Some explicit examples in this direction were constructed for bipartite quantum systems in \cite{sep+ent}, where it was also shown that such qualitative studies, in the absence of a quantitative one, can have important applications in information processing tasks. In this work, we complete the picture of exploring unconditional inseparability of superposition considering multipartite quantum systems. We consider a superposition of a pure multipartite entangled state and a product state (in this context, see Fig.~\ref{fig:quantum_superposition}). Then, we explore the instances when unconditional inseparability of superposition can be observed. More precisely, we ask about the properties of the entangled state which can guarantee that the state after superposition is entangled, no matter what is the form of the product state. Clearly, this is a study on the interrelation between two fundamental concepts, viz.,  the superposition principle and entanglement. Furthermore, we show that this study has a connection with quantum  state discrimination problems under local quantum operations and classical communication (LOCC) in an unambiguous state discrimination setting. 

\begin{figure}[t]
\centering  
\vspace{-2cm}
\includegraphics[scale=0.29]{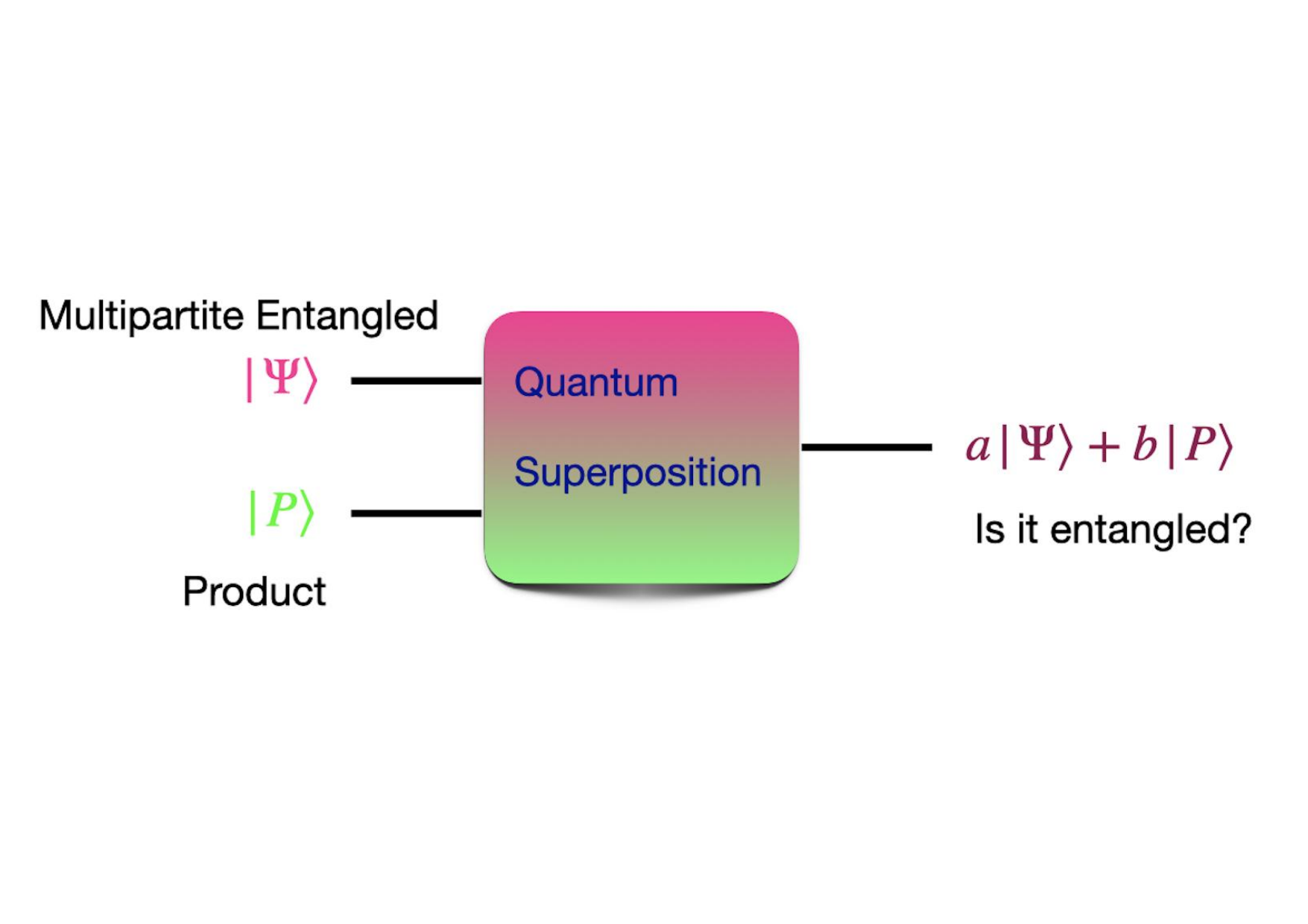}
\vspace{-2cm}
\caption{A schematic diagram of quantum superposition depicting the instance that is considered in this paper.}\label{fig:quantum_superposition}  
\end{figure}

In a state discrimination problem under LOCC, we seek the knowledge of the state in which the quantum system is prepared by doing LOCC, provided that the state is chosen from a known set. Furthermore, when we consider the unambiguous setting, we try to seek the knowledge of the state of the system with nonzero probability without committing any error. See Refs.~\cite{CHEFLES2004, Duan07, BJ2009, Cohen15, Ha21, SU2022} in this context. In our work, we show that the concept of ``unconditional inseparability of superposition'' can be useful to exhibit indistinguishability of orthogonal quantum states under LOCC in an unambiguous setting. In particular, we present a class of state discrimination problems under LOCC where unconditional inseparability of superposition can provide sufficient conditions for exhibition of  indistinguishability of states of the given set, under LOCC in an unambiguous setting.

In brief, the present work is important for the following reasons. (i) We complete the picture of unconditional inseparability of superposition, considering multipartite generalization of what was obtained in \cite{sep+ent}. (ii) In particular, such a generalization to multiparty systems is nontrivial, because in multipartite systems, there are different types of product states, for example, completely product states, bi-separable states, etc. Specifically, we consider two different cases: (a) superposition of a multipartite pure entangled state and a completely product state, and (b) superposition of a multipartite pure entangled state (preferably, the state is entangled across every bipartition) and a pure bi-separable state. From now on, in this paper when we refer to a  bi-separable state, it is  a pure bi-separable state. (iii) We find that unconditional inseparability of superposition can be found in all multipartite Hilbert spaces. Actually, for multipartite systems, starting from a three-qubit system and beyond there exists at least one entangled state with the property that prevents any completely product state from yielding a product state when nontrivially superposed with the entangled state. However, in the bipartite Hilbert spaces $\mathbb{C}^2\otimes\mathbb{C}^d$, any entangled state cannot have a Schmidt rank higher than two. Consequently, for any entangled state from these bipartite Hilbert spaces, it is always possible to find a product state such that a nontrivial superposition of the two results in a product state. Thus, we argue that every multipartite Hilbert space contains at least one state that is unconditionally superposition-robust, a feature not shared by bipartite systems. This constitutes a sharp contrast between bipartite and multipartite quantum systems. (iv) There are several potential applications of the concept of the unconditional inseparability of superpositions. One such application is connected with a class of local state discrimination problems. 

We are now ready to present our findings. In Sec.~\ref{results_n_discussions}, we present our main results and corresponding discussions. In the same section, we also include the applications of our theory. We draw the conclusions of our work in Sec.~\ref{sec:Conclusion}. 

\section{Results and Discussions}\label{results_n_discussions}
Before we discuss our results, we note here that the following notations we use interchangeably throughout this paper, $\ket{A}\otimes\ket{B}\otimes\ket{C}\otimes\cdots\equiv\ket{A}\ket{B}\ket{C}\cdots\equiv\ket{ABC\dots}$ for multipartite completely product states. For all the results, when we say superposition, it is basically the nontrivial superposition, i.e., the coefficients corresponding to the states to be superposed, are nonzero (for example, the coefficients $a$, $b$ are nonzero in Fig.~\ref{fig:quantum_superposition}). Now, before revisiting a previously existing result, let us go through the types of states possible in multipartite systems:
\begin{definition}
m-separability~\cite{RPMK2009, GUHNE2009, ASU2023,STMAAU17}: An n-partite $(n\geq3)$ pure quantum state is termed as m-separable, if it can be written as a product of pure states of a maximum of m sub-systems, viz., $\ket{\Psi_{m-sep}}=\ket{\Psi_{1}}\otimes\ket{\Psi_{2}}\otimes....\ket{\Psi_{m}}$, where m$\le$n. The state is fully separable, i.e., completely product when $m=n$, and it is considered genuinely entangled when $m=1$ i.e., not bi-separable. 
\end{definition}
In \cite{sep+ent}, it was proved that pure bipartite states having Schmidt rank three or higher always outputs an entangled state when superposed with a product state, no matter what is the form of the product state. In this context, we mention that Schmidt rank of a bipartite entangled state is basically the minimum number of product states required to express the entangled state. However, one can do a straightforward generalization of the bipartite result of \cite{sep+ent} into a multipartite scenario. This can be stated as the following. 
\begin{proposition}\label{prop_a}
Given a multipartite Hilbert space, if the total dimension of the Hilbert space $D>8$, then it is always possible to find a multipartite entangled state which outputs an entangled state when superposed with any multipartite completely product state.\end{proposition}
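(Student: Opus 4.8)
The plan is to reduce the multipartite problem to the already-established bipartite result by selecting an appropriate bipartition and placing, across that cut, an entangled state of Schmidt rank three. Writing the $n$-partite ($n\geq 3$) Hilbert space as $\mathbb{C}^{d_1}\otimes\cdots\otimes\mathbb{C}^{d_n}$ with each local dimension $d_i\geq 2$, so that $D=\prod_i d_i$, the first step is purely combinatorial: I would show that $D>8$ guarantees a bipartition of the parties into two nonempty groups $A$ and $B$ whose grouped dimensions satisfy $d_A=\prod_{i\in A}d_i\geq 3$ and $d_B=\prod_{i\in B}d_i\geq 3$.

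To establish this bipartition I would split into two cases. If some party carries local dimension $d_i\geq 3$, I place it alone in group $A$ (so $d_A\geq 3$); the remaining $n-1\geq 2$ parties form $B$, giving $d_B\geq 2^{\,n-1}\geq 4$. If instead every party is a qubit, then $D=2^n>8$ forces $n\geq 4$, and I place two qubits in $A$ and the remaining $n-2\geq 2$ qubits in $B$, so that $d_A=4$ and $d_B\geq 4$. In both cases both grouped dimensions are at least three. This is also exactly where the strict threshold $D>8$ enters: the only multipartite system it excludes is the three-qubit case $\mathbb{C}^2\otimes\mathbb{C}^2\otimes\mathbb{C}^2$ ($D=8$), for which every bipartition leaves a single qubit on one side and hence cannot support Schmidt rank three.

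With such a cut in hand, I would regard the system as the bipartite space $\mathbb{C}^{d_A}\otimes\mathbb{C}^{d_B}$ and choose the Schmidt-rank-three state $\ket{\psi}=\tfrac{1}{\sqrt{3}}\big(\ket{0}_A\ket{0}_B+\ket{1}_A\ket{1}_B+\ket{2}_A\ket{2}_B\big)$, which exists precisely because $d_A,d_B\geq 3$. Any completely product state $\ket{\phi}=\ket{\phi_1}\cdots\ket{\phi_n}$ is, in particular, a product state $\ket{\phi_A}\otimes\ket{\phi_B}$ across the $A|B$ cut. Invoking the bipartite result recalled above from \cite{sep+ent} --- that a Schmidt-rank-three (or higher) bipartite state superposed nontrivially with any product state always yields an entangled state --- the superposition $a\ket{\psi}+b\ket{\phi}$ is entangled across $A|B$ for all nonzero $a,b$. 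Since a state entangled across even one bipartition is not fully separable, $a\ket{\psi}+b\ket{\phi}$ is a multipartite entangled state, as required.

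I expect the combinatorial step --- certifying the balanced bipartition and pinning down why the bound must be strict ($D>8$ rather than $D\geq 8$) --- to be the only point requiring genuine care; once a cut with $d_A,d_B\geq 3$ is secured, the conclusion follows immediately from the bipartite theorem, making this a genuinely straightforward generalization.
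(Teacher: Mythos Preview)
Your proof is correct and follows essentially the same approach as the paper: both arguments work by finding a bipartition of the parties in which each side has grouped dimension at least three, placing a Schmidt-rank-three state across that cut, and then invoking the bipartite result of \cite{sep+ent}. Your combinatorial case split (some $d_i\geq 3$ versus all qubits) is a slightly more systematic version of the paper's argument, which proceeds largely through explicit examples and, in its general clause, groups the one or two largest-dimensional parties against the rest; the underlying idea is identical.
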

\begin{proof}
If $D>8$, then we can always find a pure multipartite entangled state which have Schmidt rank three or higher in at least one bipartition. Thus, it always outputs an entangled state after getting superposed with an arbitrary completely product state \cite{sep+ent}. Here are few examples, elaborating above statement. (i) We consider $\ket{000}+\ket{101}+\ket{210}\in\mathbb{C}^3\otimes\mathbb{C}^2\otimes\mathbb{C}^2$. This state has Schmidt rank three in qutrit versus two-qubit bipartition. (ii) We consider $\ket{000}+\ket{111}+\ket{222}\in\mathbb{C}^3\otimes\mathbb{C}^3\otimes\mathbb{C}^3$. This state has Schmidt rank three across every bipartition. (iii) We consider
$\ket{0000}+\ket{0101}+\ket{1010}\in\mathbb{C}^2\otimes\mathbb{C}^2\otimes\mathbb{C}^2\otimes\mathbb{C}^2$. This state has Schmidt rank three in two-qubit versus two-qubit bipartition. (iv) In general, we can consider an n-partite Hilbert space $\mathbb{C}^{d_{1}}\otimes\mathbb{C}^{d_{2}}\otimes\mathbb{C}^{d_{3}}\otimes\cdots\otimes\mathbb{C}^{d_{n}}$, where $d_{1}\le d_{2}\le\cdots\le d_{n}$, $n\ge3$ and $d_1d_2\dots d_n>8$. Then, we can find a state $\ket{\Psi}$ in the given Hilbert space such that the state is entangled in $\mathbb{C}^{d_n}$ versus rest bipartition when $n=3$ or the state is entangled in $\mathbb{C}^{d_n}\otimes\mathbb{C}^{d_{n-1}}$ versus rest bipartition when $n>3$. These allow us to consider their Schmidt ranks in the bipartitions $\ge3$. These states if superposed with an arbitrary completely product state, always output an entangled state.
\end{proof}

Notice that the above observation about unconditional inseparability of superposition is clearly excluding the case where all the subsystems have dimension two. In this work, we particularly provide nontrivial results for this excluded case to fill the gap and complete the multipartite picture. Note that the term ``product state" specifically refers to a completely product state from now onward. For bi-separable states which are entangled in some but not in every bipartition, their use in this paper is explicitly stated whenever such states are referred. Now, if a pure multipartite state is entangled, it guarantees the existence of entanglement within the state in at least one bipartition. Suppose, we consider this specific bipartition. Then, the entanglement across this bipartition allows us to express the pure multipartite state in terms of its Schmidt decomposition~\cite{Schmidt1907,AK1995,A1995} in that bipartition. For the present analysis, we confine our study to three-qubit systems first. So, in the aforesaid bipartition, say A:BC, a three-qubit entangled state takes the following form:
\begin{equation}\label{eq:eqn1}
\ket{\Psi}=c\ket{\alpha}_{A}\ket{\Phi}_{BC}+ d\ket{\alpha^{\perp}}_{A}\ket{\Phi^{\perp}} _{BC},
\end{equation}
where $c,d$ are real, nonzero numbers and $c^2+d^2=1$. Let us consider that either $\ket{\Phi}$ or $\ket{\Phi^{\perp}}$ is entangled while the other is a product state. If both states are product states then, the superposition could yield a product state. However, our interest lies in the {\it unconditional inseparability} of superpositions. Thus, the aforementioned form of a multipartite pure entangled state is quite justified as a starting point for our analysis. It is essential to highlight that, given the existence of such Schmidt decomposition for any three-qubit pure entangled state, it is not required to consider the scenario where $\ket{\Phi}$ and $\ket{\Phi^{\perp}}$ are non orthogonal to each other.

\subsection{Separability of superposition}\label{subsec:Separability of superposition}
We consider a three-qubit pure entangled state $\ket{\Psi}$ and restate the following assumptions. 

\begin{itemize}
\item The state $\ket{\Psi}$ has the following form $c\ket{\alpha}\ket{\Phi} + d\ket{\alpha^\perp}\ket{\Phi^\perp}$, where $c,d$ are real, nonzero numbers, $c^2+d^2$ = 1, $\langle\alpha|\alpha^\perp\rangle$ = 0 = $\langle\Phi|\Phi^\perp\rangle$, in at least one bipartition.

\item This is basically Schmidt decomposition in this bipartition. Without loosing generality, we say this bipartition as A vs BC bipartition or A:BC bipartition.

\item We also consider that $\ket{\Phi}$ is an entangled state and $\ket{\Phi^\perp}$ is a product state. 
\end{itemize}

\noindent
Then, for such a state $\ket{\Psi}$, we have the following proposition.

\begin{proposition}\label{prop:prop1}
There exists a product state $\ket{\eta}\ket{\beta}\ket{\gamma}$, such that when superposed with $\ket{\Psi}$, the state after superposition might be a product state. In other words, the state $a\ket{\Psi}+b\ket{\eta}\ket{\beta}\ket{\gamma}$ might be a state of the form $\ket{\eta^\prime}\ket{\beta^\prime}\ket{\gamma^\prime}$, where $|a|^2+|b|^2 = 1$, $|a|, |b| \neq 0$.
\end{proposition}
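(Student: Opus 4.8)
The plan is to prove the statement constructively, by exhibiting an explicit product state $\ket{\eta}\ket{\beta}\ket{\gamma}$ together with coefficients $a,b$ for which the superposition collapses to a fully product state. The guiding observation is that $a\ket{\Psi}+b\ket{\eta}\ket{\beta}\ket{\gamma}$ is product exactly when $\ket{\Psi}$ itself can be written as a nontrivial linear combination of two completely product states: one of them is the state to be superposed, and the other is the desired output $\ket{\eta'}\ket{\beta'}\ket{\gamma'}$. So the real task is to re-express $\ket{\Psi}$ as such a two-term combination.

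First I would work inside the two-dimensional subspace $V=\mathrm{span}\{\ket{\Phi},\ket{\Phi^{\perp}}\}$ of the $B{:}C$ space $\mathbb{C}^{2}\otimes\mathbb{C}^{2}$. Viewing two-qubit vectors as $2\times2$ matrices, product states are precisely the rank-one (vanishing-determinant) elements. Since $\ket{\Phi^{\perp}}$ is already product and the determinant restricted to a projective line of $2\times2$ matrices is a quadratic form, $V$ generically contains a \emph{second} product state $\ket{\chi}=\ket{\beta}\ket{\gamma}$, linearly independent of $\ket{\Phi^{\perp}}$. Because $\ket{\Phi}$ is entangled it cannot be proportional to $\ket{\Phi^{\perp}}$, so in the basis $\{\ket{\Phi^{\perp}},\ket{\chi}\}$ of $V$ its $\ket{\chi}$-component is nonzero, and I may write $\ket{\Phi}=\lambda\ket{\Phi^{\perp}}+\rho\ket{\chi}$ with $\rho\neq0$.

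Substituting this into the $A{:}BC$ form of $\ket{\Psi}$ gives
\[
\ket{\Psi}=c\rho\,\ket{\alpha}\ket{\chi}+\bigl(c\lambda\ket{\alpha}+d\ket{\alpha^{\perp}}\bigr)\ket{\Phi^{\perp}}.
\]
I then choose the superposed product state to be $\ket{\eta}\ket{\beta}\ket{\gamma}=\ket{\alpha}\ket{\chi}$ and pick $b=-ac\rho$, which cancels the $\ket{\alpha}\ket{\chi}$ term and leaves
\[
a\ket{\Psi}+b\,\ket{\alpha}\ket{\chi}=a\bigl(c\lambda\ket{\alpha}+d\ket{\alpha^{\perp}}\bigr)\otimes\ket{\Phi^{\perp}}.
\]
Since $\ket{\Phi^{\perp}}$ is product, the right-hand side has the form $\ket{\eta'}\ket{\beta'}\ket{\gamma'}$; rescaling so that $|a|^{2}+|b|^{2}=1$ keeps both coefficients nonzero because $a,c,\rho\neq0$. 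This establishes the claim.

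The main obstacle is the step guaranteeing the existence of the second product state $\ket{\chi}$ inside $V$, and I expect this to be the only delicate point, since it can genuinely fail. Writing $\ket{\Phi^{\perp}}=\ket{b'}\ket{c'}$, the determinant quadratic acquires a double root at $\ket{\Phi^{\perp}}$ exactly when the $\ket{b'^{\perp}}\ket{c'^{\perp}}$ component of $\ket{\Phi}$ vanishes, i.e.\ when $\ket{\Psi}$ is of the $W$ type; in that case $V$ contains no other product direction, $\ket{\Psi}$ has tensor rank three, and no such reduction exists. This is precisely what the word ``might'' in the statement accommodates: the construction succeeds whenever this genericity condition holds. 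I would make the condition explicit (through the mixed-determinant, equivalently hyperdeterminant, criterion) and flag the $W$-type states as the boundary of the result.
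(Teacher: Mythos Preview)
Your proof is correct. Both you and the paper start from the same reduction: fix $\ket{\eta}=\ket{\alpha}$ and look for a product vector in $\mathrm{span}\{\ket{\Phi},\ket{\Phi^\perp}\}$ distinct from $\ket{\Phi^\perp}$, so that the superposition collapses to something proportional to $\ket{\eta'}\otimes\ket{\Phi^\perp}$. The paper, however, treats the proposition as a pure existence statement and simply exhibits one numerical instance: it picks $a=-b$, reduces the question to whether $c\ket{\Phi}-d\ket{\Phi^\perp}$ can be product, and then writes down explicit $\ket{\Phi}$, $\ket{\Phi^\perp}$, $c$, $d$ for which this holds. Your argument is more structural: via the determinant quadratic you show that a second product direction $\ket{\chi}$ in $\mathrm{span}\{\ket{\Phi},\ket{\Phi^\perp}\}$ exists precisely when the $\ket{b'^\perp c'^\perp}$-component of $\ket{\Phi}$ is nonzero, and then the construction goes through for \emph{every} such $\ket{\Psi}$ (with arbitrary $c,d$), not just a single hand-picked one. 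What this buys you is an explicit identification of the boundary case---the W-type states---where no such product superposition exists; the paper reaches that dichotomy only later, after Proposition~\ref{prop:prop2}. Conversely, the paper's concrete example has the virtue of being immediately verifiable without the rank/determinant machinery.
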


\begin{proof}
To prove the above, it is sufficient to find an example. To construct such a nontrivial example, we start with the form of the state $\ket{\Psi}$ in A:BC bipartition and we take the state $\ket{\eta}\ket{\beta}\ket{\gamma}$ as $\ket{\alpha}\ket{\beta}\ket{\gamma}$. Then, we consider the superposition:

\begin{equation*}
\begin{aligned}
a\ket{\Psi} + b\ket{\alpha}\ket{\beta}\ket{\gamma} &= a(c\ket{\alpha}\ket{\Phi} + d\ket{\alpha^\perp}\ket{\Phi^\perp}) + b\ket{\alpha}\ket{\beta}\ket{\gamma}\\
                 &= \ket{\alpha}(ac\ket{\Phi}+b\ket{\beta}\ket{\gamma})+ad\ket{\alpha^\perp}\ket{\Phi^\perp}.
\end{aligned}
\end{equation*}
We assume that in A:BC bipartition the state is a product state i.e., $ac\ket{\Phi}+b\ket{\beta}\ket{\gamma} = ad\ket{\Phi^\perp}$. If this happens then, after superposition the state is proportional to $\ket{\eta^\prime}\ket{\Phi^\perp} \equiv \ket{\eta^\prime}\ket{\beta^\prime}\ket{\gamma^\prime}$, which is a completely product state. This implies that $\ket{\beta}\ket{\gamma} = a(d\ket{\Phi^\perp}-c\ket{\Phi})/b$. We can take $a = -b$. In this way, the problem boils down to the following problem. We have to find an example where superposition of a product state and an entangled state gives a product state. The entangled state and the product state must be orthogonal to each other. Thus, it is sufficient to show that $c\ket{\Phi}-d\ket{\Phi^\perp}$ is a product state.

Now, we consider $\ket{\Phi} = \sqrt{1/3}\ket{00}+\sqrt{2/3}\ket{11}$ and $\ket{\Phi^\perp} = (\sqrt{2/3}\ket{0}+\sqrt{1/3}\ket{1})(\sqrt{1/2}\ket{0}-\sqrt{1/2}\ket{1})$. It is easy to check that both states are orthogonal to each other. Next, we can consider that $c=1/\sqrt{5}$ and $d=2/\sqrt{5}$, for which the state $c\ket{\Phi}-d\ket{\Phi^\perp}$ is a product state. These complete the proof.
\end{proof}

Clearly, the assumptions we consider for $\ket{\Psi}$ before Proposition \ref{prop:prop1}, are not sufficient to declare that $a\ket{\Psi}+b\ket{\eta}\ket{\beta}\ket{\gamma}$ is always entangled, no matter what the form of the state $\ket{\eta}\ket{\beta}\ket{\gamma}$, where $|a|^2+|b|^2 = 1$, $|a|, |b| \neq 0$. So, here these assumptions are not useful to observe unconditional inseparability of superposition.

\subsection{Inseparability of superposition}\label{subsec:Inseparability of superposition}
We now introduce an additional assumption to the state $\ket{\Psi}$ of Eq.~(\ref{eq:eqn1}). This is given as the following. 

\begin{itemize}
\item Given that $\ket{\Phi}$ is a two-qubit entangled state, it has a decomposition as sum of two product states. Now, let us assume that $\ket{\Phi^{\perp}}$ is such that it is orthogonal to both of the product states in the decomposition of $\ket{\Phi}$.
\end{itemize}
Then, we have the following proposition.

\begin{proposition}\label{prop:prop2}
Under all of the assumptions corresponding to the state $\ket{\Psi}$, the three-qubit entangled state when superposed with any completely product state, always outputs an entangled state. 
\end{proposition}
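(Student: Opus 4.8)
The plan is to argue by contradiction, reducing the statement to a structural (tensor-rank) obstruction: I would show that, under the stated assumptions, $\ket{\Psi}$ cannot be written as a nontrivial linear combination of two completely product states. Suppose, to the contrary, that for some completely product state $\ket{\eta}\ket{\beta}\ket{\gamma}$ and some nonzero $a,b$ the superposition $a\ket{\Psi}+b\ket{\eta}\ket{\beta}\ket{\gamma}$ equals a completely product state $\ket{x}\ket{y}\ket{z}$. Since $a\neq 0$, rearranging gives $\ket{\Psi}=\tfrac{1}{a}\ket{x}\ket{y}\ket{z}-\tfrac{b}{a}\ket{\eta}\ket{\beta}\ket{\gamma}$, i.e.\ $\ket{\Psi}$ is expressed as a sum of two completely product states $\ket{P_1}$ and $\ket{P_2}$. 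The point of this reduction is that deriving a contradiction from this single consequence establishes the proposition for \emph{every} product state and all admissible coefficients simultaneously.

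First I would pass to the A:BC bipartition underlying the Schmidt form of Eq.~(\ref{eq:eqn1}). Writing $\ket{P_j}=\ket{u_j}_A\ket{v_j}_{BC}$, the range of the reduced state $\rho_{BC}=\mathrm{Tr}_A\ketbra{\Psi}{\Psi}$ is contained in $\mathrm{span}\{\ket{v_1},\ket{v_2}\}$; as this range is the two-dimensional Schmidt support $\mathrm{span}\{\ket{\Phi},\ket{\Phi^\perp}\}$ (because $c,d\neq0$), the two product vectors in fact span it, so each $\ket{v_j}$ lies in this plane. Moreover, since each $\ket{P_j}$ is completely product, each $\ket{v_j}$ is itself a product state across B:C.

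The heart of the argument, and the step I expect to be the main obstacle, is a lemma pinning down the product states inside this plane: under the additional assumption, the only states in $\mathrm{span}\{\ket{\Phi},\ket{\Phi^\perp}\}$ that are product across B:C are the scalar multiples of $\ket{\Phi^\perp}$. To prove it I would work in the Schmidt bases of B and C for which $\ket{\Phi}=\sqrt{p}\ket{00}+\sqrt{1-p}\ket{11}$ with $0<p<1$; the two product states in its decomposition are then $\ket{00}$ and $\ket{11}$, and the assumption that $\ket{\Phi^\perp}$ is a product state orthogonal to both forces $\ket{\Phi^\perp}\in\{\ket{01},\ket{10}\}$. Representing a general element $s\ket{\Phi}+t\ket{\Phi^\perp}$ of the plane by its $2\times 2$ coefficient matrix and computing the determinant (whose vanishing characterizes product two-qubit states) yields $s^2\sqrt{p(1-p)}$, which vanishes only when $s=0$; hence every product state in the plane is proportional to $\ket{\Phi^\perp}$.

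Combining these, both $\ket{v_1}$ and $\ket{v_2}$ must be proportional to $\ket{\Phi^\perp}$, so $\ket{\Psi}=\bigl(\mu_1\ket{u_1}+\mu_2\ket{u_2}\bigr)_A\otimes\ket{\Phi^\perp}_{BC}$ would be a product state across A:BC, contradicting the fact that $\ket{\Psi}$ has Schmidt rank two in that cut (both $c,d$ being nonzero). Therefore no nontrivial superposition of $\ket{\Psi}$ with a completely product state can be completely product, which is exactly the claimed unconditional inseparability. The subtlety to handle carefully is the reduction to the orthogonal (Schmidt) product decomposition of $\ket{\Phi}$ in the lemma: for a non-orthogonal product decomposition the plane could in principle carry a second product direction, and it is precisely the orthogonality built into the Schmidt decomposition that rules this out and makes the determinant collapse to $s^2\sqrt{p(1-p)}$.
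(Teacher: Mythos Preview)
Your argument is correct and takes a genuinely different route from the paper's. The paper proceeds by a direct case analysis on the $A$-component of the product state $\ket{p}$ (namely $\ket{\alpha}$, $\ket{\alpha^\perp}$, or a nontrivial superposition of them) and, in each case, checks that the superposed state cannot be completely product; the key observation it uses is exactly your lemma, that no nontrivial combination $s\ket{\Phi}+t\ket{\Phi^\perp}$ with $s\neq0$ is product across $B\!:\!C$. You instead recast the question as a tensor-rank obstruction: if the superposition were completely product, $\ket{\Psi}$ would be a sum of two completely product states, and you then show this is impossible by locating the $BC$-factors in the Schmidt plane and invoking the same lemma. Your approach is cleaner in that it handles all product states and all coefficients at once, and it makes explicit what the paper only alludes to (and dismisses as hard in general), namely that the assumptions force the completely-product tensor rank of $\ket{\Psi}$ to be at least three. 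The paper's case split, on the other hand, is slightly more elementary in that it never needs to pass through the range of $\rho_{BC}$ or argue about dimensions; it just manipulates the superposition directly. Both rest on the same determinant computation you give, which is the real content of the additional assumption.
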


\begin{remark}\label{remark:remark1}
Here the assumptions mean the assumptions which are described in the itemized forms just before the statements of Proposition \ref{prop:prop1} and Proposition \ref{prop:prop2}. Furthermore, `the output state is entangled' means it is entangled in at least one bipartition, i.e., it is not a completely product state.
\end{remark}

\begin{proof}
Consider a three qubit state $\ket{\Psi}$ having the form given in Eq.~(\ref{eq:eqn1}), i.e.,
\begin{equation*}
\ket{\Psi}=c\ket{\alpha}\ket{\Phi}+ d\ket{\alpha^{\perp}}\ket{\Phi^{\perp}}, 
\end{equation*}
where $c,d$ are real, nonzero numbers and $c^2+d^2=1$. We can write the entangled state $\ket{\Phi}$ in the Schmidt form as follows, $\ket{\Phi} = s\ket{00} + t\ket{11}$, where $s,t$ are real, nonzero numbers and $s^2+t^2=1$. Given this Schmidt decomposition, an appropriate choice for the product state $\ket{\Phi^{\perp}}$, which is orthogonal to the product states $\ket{00}$ and $\ket{11}$ present in the Schmidt decomposition of the entangled state $\ket{\Phi}$, would be $\ket{01}$ or $\ket{10}$.  

Now, we consider a completely product state $\ket{p}$ and present the proof in three different parts. For these parts, we consider three different forms of $\ket{p}$. They are: (i) $\ket{\alpha}\ket{\beta}\ket{\gamma}$, (ii) $\ket{\alpha^\perp}\ket{\beta}\ket{\gamma}$, (iii) $(x\ket{\alpha}+y\ket{\alpha^\perp})\ket{\beta}\ket{\gamma}$, when `$|x|$' and `$|y|$' both are nonzero. 

Suppose, we first assume $\ket{p}=\ket{\alpha}\ket{\beta}\ket{\gamma}$. Consider a nontrivial superposition of $\ket{\Psi}$ with $\ket{p}$, meaning that both `$|a|$' and `$|b|$' are nonzero and $|a|^2+|b|^2=1$ in the following.  
\begin{equation*}
\begin{aligned}
a\ket{\Psi} + b\ket{p} &= a(c\ket{\alpha}\ket{\Phi} + d\ket{\alpha^\perp}\ket{\Phi^\perp}) + b\ket{\alpha}\ket{\beta\gamma} \\
                       &= \ket{\alpha}(ac\ket{\Phi} + b\ket{\beta\gamma}) + ad\ket{\alpha^\perp}\ket{\Phi^\perp}.
\end{aligned}
\end{equation*}
Recall that $\ket{\Phi}$ is an entangled state and $\ket{\beta\gamma}$ is a product state. Therefore, their linear combination can give a product state but this product state can never be proportional to $\ket{\Phi^\perp}$. This can be easily seen from the fact that any linear combination of $\ket{\Phi}$ and $\ket{\Phi^\perp}$ always gives an entangled state, under our additional constraint. So, the output state of a nontrivial superposition of the state $\ket{\Psi}$ with a product state $\ket{p}$ can never be a completely product state.

Then, we consider $\ket{p} = \ket{\alpha^\perp}\ket{\beta}\ket{\gamma}$. So, now we have the following.
\begin{equation*}
\begin{aligned}
a\ket{\Psi} + b\ket{p} &= a(c\ket{\alpha}\ket{\Phi} + d\ket{\alpha^\perp}\ket{\Phi^\perp}) + b\ket{\alpha^\perp}\ket{\beta\gamma}\\
                       &= ac\ket{\alpha}\ket{\Phi} + \ket{\alpha^\perp}(ad\ket{\Phi^\perp} + b\ket{\beta\gamma}). 
\end{aligned}
\end{equation*}
To make the output a completely product state, one has to make $(ad\ket{\Phi^\perp} + b\ket{\beta\gamma})$ proportional to $\ket{\Phi}$. But this makes the output state a bi-separable state, not a completely product state.

Finally, we take $\ket{p} = (x\ket{\alpha}+y\ket{\alpha^\perp})\ket{\beta}\ket{\gamma}$, when `$|x|$' and `$|y|$' both are nonzero. So, after superposition, we have the following.
\small
\begin{equation*}
\begin{aligned}
a\ket{\Psi} + b\ket{p} &= 
a(c\ket{\alpha}\ket{\Phi} + d\ket{\alpha^\perp}\ket{\Phi^\perp}) + b(x\ket{\alpha}+y\ket{\alpha^\perp})\ket{\beta\gamma} \\
&=
\ket{\alpha}(ac\ket{\Phi} + bx\ket{\beta\gamma}) + \ket{\alpha^\perp}(ad\ket{\Phi^\perp} + by\ket{\beta\gamma}). 
\end{aligned}
\end{equation*}
\normalsize
Clearly, to make the output a completely product state, we have to make $ac\ket{\Phi} + bx\ket{\beta\gamma}$ proportion to $ad\ket{\Phi^\perp} + by\ket{\beta\gamma}$. So, basically the product state $\ket{\beta\gamma}$ can be written as some linear combination of $\ket{\Phi}$ and $\ket{\Phi^\perp}$. However, this combination cannot produce a product state as described previously. In this way, the proof is completed.
\end{proof}

\begin{remark}
We mention that through Proposition \ref{prop:prop2}, we achieve unconditional inseparability of superposition for the state $\ket{\Psi}$.
\end{remark}

Using the concept of tensor rank \cite{LERZA2010} of an entangled state, one may also think about doing the present analysis. Tensor rank is defined as the minimum number of completely product states required to express a multipartite entangled state. But tensor rank of an arbitrary state is hard to compute \cite{HASTAD1990644, Hiller13}. However, we bypass this difficulty and examine the underlying fact of unconditional inseparability of superposition. For this purpose, one may think about analyzing the form of $\ket{\Psi}$. This is exactly what we have done through Proposition \ref{prop:prop1} and Proposition \ref{prop:prop2}. Clearly, the way which we have followed here, may not be the only way for the present analysis but we argue that our way of the analysis might be easier and effective compared to another approach.

The key transition from Proposition \ref{prop:prop1} to Proposition \ref{prop:prop2} is based on the additional assumption regarding $\ket{\Phi}$ and $\ket{\Phi^\perp}$. This ensures the persistent presence of traces of entanglement in the output state. Interestingly, on the basis of this, it is also possible to differentiate between the states belonging to the GHZ (Greenberger-Horne-Zeilinger) class and those belonging to the W class (for these classes see Ref.~\cite{DVC00}). This differentiation can be treated as an application of our theory. Furthermore, this also justifies our approach of starting with the forms of the entangled states for the present analysis.

\subsection{Remaining part}
A three-qubit pure entangled state which is not genuinely entangled can always be written as linear combination of two completely product states. Therefore, it is always possible to find a completely product state, such that superposition of this product state and the entangled state which is not genuinely entangled, can produce completely product state. Then, we are left with pure genuinely entangled states which are entangled across every bipartition. They are either GHZ-type or W-type for three qubits \cite{DVC00}. Their roles in superposition are already discussed above. Thus, the discussion of three-qubit entangled states in superposition with completely product state is complete now. However, we have started with the form of the entangled states. Thus, for completeness, we discuss the following case. Let us now go back to the state given by Eq.~(\ref{eq:eqn1}) again, but now we examine the case where both $\ket{\Phi}$ and $\ket{\Phi^{\perp}}$ are orthogonal entangled states. We now present the following proposition.

\begin{proposition}\label{prop:prop3} 
We consider a state $\ket{\eta}\ket{\beta}\ket{\gamma}$, such that when superposed with $\ket{\Psi}$, in which both $\ket{\Phi}$ and $\ket{\Phi^{\perp}}$ are orthogonal entangled states, the state after superposition can either be a bi-separable state or a completely product state. 
\end{proposition}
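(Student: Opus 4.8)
The plan is to prove this existence statement by explicit construction, in the same spirit as Proposition \ref{prop:prop1}. Since both $\ket{\Phi}$ and $\ket{\Phi^\perp}$ are now two-qubit entangled states (Schmidt rank two), I would again choose the trial product state to share the first-qubit component of the Schmidt decomposition, i.e.\ $\ket{\eta}\ket{\beta}\ket{\gamma}=\ket{\alpha}\ket{\beta}\ket{\gamma}$, so that the superposition collects as
\begin{equation*}
a\ket{\Psi}+b\ket{\alpha}\ket{\beta}\ket{\gamma}=\ket{\alpha}\bigl(ac\ket{\Phi}+b\ket{\beta\gamma}\bigr)+ad\,\ket{\alpha^\perp}\ket{\Phi^\perp}.
\end{equation*}
To kill the entanglement across the A:BC cut I would demand that the bracket be proportional to $\ket{\Phi^\perp}$, i.e.\ $ac\ket{\Phi}+b\ket{\beta\gamma}=\kappa\ket{\Phi^\perp}$; the output then becomes $(\kappa\ket{\alpha}+ad\ket{\alpha^\perp})\ket{\Phi^\perp}$, which is a product across A:BC but, because $\ket{\Phi^\perp}$ is entangled in BC, is a \emph{bi-separable} state rather than a completely product one.

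The key step is then to guarantee that the required $\ket{\beta\gamma}\propto \kappa\ket{\Phi^\perp}-ac\ket{\Phi}$ can actually be chosen to be a genuine (rank-one) product state of qubits B and C. I would argue this by representing each two-qubit vector as a $2\times2$ matrix, for which being a product state is equivalent to the vanishing of its determinant. The determinant of $\mu M_\Phi+\nu M_{\Phi^\perp}$ is a homogeneous quadratic form in $(\mu,\nu)$; because both $\ket{\Phi}$ and $\ket{\Phi^\perp}$ are entangled, the coefficients of $\mu^2$ and $\nu^2$ (namely $\det M_\Phi$ and $\det M_{\Phi^\perp}$) are both nonzero, so the associated quadratic in $\nu/\mu$ has product of roots equal to $\det M_\Phi/\det M_{\Phi^\perp}\neq0$. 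Hence neither root vanishes, yielding a solution with both $\mu,\nu$ nonzero, which I would match to $(-ac,\kappa)$ and absorb into a normalized $\ket{\beta\gamma}$ along with a consistent choice of $a,b$. This shows the bi-separable outcome is \emph{always} attainable under the stated assumptions.

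To justify the ``or completely product'' alternative, I would exhibit one explicit instance, since a completely product output forces $\ket{\Psi}$ to be a sum of two completely product states (tensor rank two), which is not generic for states of the assumed form. A clean witness is the GHZ state: exploiting its Schmidt-coefficient degeneracy in the A:BC cut, I would rotate the Schmidt basis so that $\ket{\Psi}\propto\ket{000}+\ket{111}$ is presented with both $\ket{\Phi}\propto\ket{00}+\ket{11}$ and $\ket{\Phi^\perp}\propto\ket{00}-\ket{11}$ entangled and mutually orthogonal; then superposing with the product state $\ket{111}$ (with coefficients fixed by cancelling the $\ket{111}$ term, e.g.\ $a=\sqrt{2/3},\,b=-\sqrt{1/3}$) returns the completely product state $\ket{000}$. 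This demonstrates that, depending on $\ket{\Psi}$, the reduced output can be either bi-separable or completely product.

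The main obstacle I anticipate is the key step: securing the product linear combination with \emph{both} coefficients nonzero, so that the superposition is genuinely nontrivial and $\ket{\beta\gamma}$ really mixes $\ket{\Phi}$ and $\ket{\Phi^\perp}$ -- this is precisely what the non-vanishing of the two determinants guarantees -- followed by the scalar bookkeeping ensuring $|a|^2+|b|^2=1$ with $|a|,|b|\neq0$. A secondary subtlety is confirming that the bi-separable output is not accidentally completely product, which holds because $\ket{\Phi^\perp}$ is entangled by assumption.
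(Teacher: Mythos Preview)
Your argument is correct and establishes the proposition, but it differs in structure from the paper's proof. The paper carries out a systematic case analysis over the three possible forms of the product state on the first qubit, namely $\ket{\alpha}\ket{\beta\gamma}$, $\ket{\alpha^\perp}\ket{\beta\gamma}$, and $(x\ket{\alpha}+y\ket{\alpha^\perp})\ket{\beta\gamma}$, and shows that the first two forms can yield at best a bi-separable output (because forcing the A:BC cut to be product leaves the entangled $\ket{\Phi^\perp}$ or $\ket{\Phi}$ on BC), while only the third form can produce a completely product output; the latter is then witnessed by the example $\ket{\Phi}=\tfrac{1}{\sqrt2}(\ket{00}+\ket{11})$, $\ket{\Phi^\perp}=\tfrac{1}{\sqrt2}(\ket{00}-\ket{11})$ with $\ket{p}=\tfrac{1}{\sqrt2}(\ket{\alpha}+\ket{\alpha^\perp})\ket{00}$. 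You instead proceed purely constructively: for the bi-separable outcome you stay in the first case and use the determinant criterion on the pencil $\mu M_\Phi+\nu M_{\Phi^\perp}$ to guarantee a product $\ket{\beta\gamma}$ with both coefficients nonzero, and for the completely product outcome you exhibit the GHZ state in a rotated Schmidt basis superposed with $\ket{111}$ (which, in the paper's taxonomy, is an instance of the third case). Your determinant argument actually supplies a detail the paper leaves implicit, namely that a suitable product $\ket{\beta\gamma}$ genuinely exists with a nontrivial $\ket{\Phi}$-component; conversely, the paper's case split buys the extra structural information that a completely product output is \emph{impossible} unless the first qubit of $\ket{p}$ has support on both $\ket{\alpha}$ and $\ket{\alpha^\perp}$, which your purely existential approach does not extract.
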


\begin{proof}
Consider a three qubit state $\ket{\Psi}$ having the form given in Eq.~(\ref{eq:eqn1}), i.e.,
\begin{equation*}
\ket{\Psi}=c\ket{\alpha}\ket{\Phi}+ d\ket{\alpha^{\perp}}\ket{\Phi^{\perp}},
\end{equation*}
where both $\ket{\Phi}$ and $\ket{\Phi^{\perp}}$ are orthogonal entangled states, $c,d$ are real, nonzero numbers, $c^2+d^2=1$. Similar to Proposition \ref{prop:prop2}, we would like to analyze the three possible cases for any completely product state $\ket{p}$. This leads to the following observations. 

(i) We first assume $\ket{p}=\ket{\alpha}\ket{\beta}\ket{\gamma}$, then,
\begin{equation*}
\begin{aligned}
a\ket{\Psi} + b\ket{p} &= a(c\ket{\alpha}\ket{\Phi} + d\ket{\alpha^\perp}\ket{\Phi^\perp}) + b\ket{\alpha\beta\gamma}\\
&= \ket{\alpha}(ac\ket{\Phi} + b\ket{\beta\gamma}) + ad\ket{\alpha^\perp}\ket{\Phi^\perp}.
\end{aligned}
\end{equation*}
For such a resultant state to be product in A:BC bipartition, $ac\ket{\Phi} + b\ket{\beta\gamma}$ has to be proportional to $\ket{\Phi^\perp}$ but now given that $\ket{\Phi^\perp}$ is already a bipartite pure entangled state implies that the resultant state would always be an entangled state (entangled in bipartition(s) other than A:BC bipartition). So, here the state after superposition cannot be a completely product state.

(ii) Next, we assume $\ket{p}=\ket{\alpha^\perp}\ket{\beta}\ket{\gamma}$, then,
\begin{equation*}
\begin{aligned}
a\ket{\Psi} + b\ket{p} &= a(c\ket{\alpha}\ket{\Phi} + d\ket{\alpha^\perp}\ket{\Phi^\perp}) + b\ket{\alpha^\perp\beta\gamma} \\
&= \ket{\alpha^\perp}(ad\ket{\Phi^\perp} + b\ket{\beta\gamma}) + ac\ket{\alpha}\ket{\Phi}.
\end{aligned}
\end{equation*}
Following the same argument as in case of (i), again, we have an entangled state after superposition.

(iii) Finally, we assume $\ket{p}=(x\ket{\alpha}+y\ket{\alpha^\perp})\ket{\beta}\ket{\gamma}$ $|x|,|y|>0$, and $|x|^2+|y|^2=1$ then,
\begin{equation*}
\begin{aligned}
& a\ket{\Psi} + b\ket{p} & \\
& = a(c\ket{\alpha}\ket{\Phi} + d\ket{\alpha^\perp}\ket{\Phi^\perp}) + b(x\ket{\alpha}+y\ket{\alpha^\perp})\ket{\beta\gamma} & \\
& = \ket{\alpha^\perp}(ad\ket{\Phi^\perp} + by\ket{\beta\gamma}) + \ket{\alpha}(ac\ket{\Phi}+bx\ket{\beta\gamma}). &
\end{aligned}
\end{equation*}
If the resultant state is product in A:BC bipartition, then $ad\ket{\Phi^\perp} + by\ket{\beta\gamma}$ is proportional to $ac\ket{\Phi}+bx\ket{\beta\gamma}$ which implies that $\ket{\beta\gamma}$ is some linear combination of $\ket{\Phi}$ and $\ket{\Phi^\perp}$. So, this case can give a completely product output. 
\end{proof}
In the above, case (iii) can be understood by considering examples.  
\begin{example}Consider $\ket{\Phi} = \frac{1}{\sqrt{2}}(\ket{00}+\ket{11})$ and $\ket{\Phi^\perp} = \frac{1}{\sqrt{2}}(\ket{00}-\ket{11})$. Then, we take $\ket{\Psi} = \frac{1}{\sqrt{2}}(\ket{\alpha}\ket{\Phi}+\ket{\alpha^\perp}\ket{\Phi^\perp})$. Next, we take $\ket{p} = \frac{1}{\sqrt{2}}(\ket{\alpha}+\ket{\alpha^\perp})\ket{00}$. Finally, in superposition, we consider the coefficients as $a = \sqrt{2}$ and $b = -1$. In this way, we can obtain a state after superposition as $\frac{1}{\sqrt{2}}(\ket{\alpha}-\ket{\alpha^\perp})\ket{11}$.
\end{example}

Note that these observations remain the same even if we choose $\ket{\Phi}$ and $\ket{\Phi^\perp}$ to be bi-orthogonal, i.e., the product states in the decomposition of $\ket{\Phi}$ are orthogonal to that of $\ket{\Phi^\perp}$. In that case also one might get output to be completely product. 
\begin{example}
Consider $\ket{\Phi} = \frac{1}{\sqrt{2}}(\ket{00}+\ket{11})$ and $\ket{\Phi^\perp} = \frac{1}{\sqrt{2}}(\ket{01}+\ket{10})$, $\ket{\Psi} = \frac{1}{\sqrt{2}}(\ket{\alpha}\ket{\Phi}+\ket{\alpha^\perp}\ket{\Phi^\perp})$. Then, the product state is $\ket{p} = \frac{1}{\sqrt{2}}(\ket{\alpha}-\ket{\alpha^\perp})\ket{-}\ket{-}$, where $\ket{-} = \frac{1}{\sqrt{2}}(\ket{0}-\ket{1})$. Finally, we take $a = \sqrt{2}$ and $b = -1$. Applying all these, one can get a completely product state after superposition.
\end{example}

\subsection{Bipartite systems versus multipartite systems}
A multipartite system can be different from a bipartite system in various ways. Here we describe one such instance. This is of course based on the findings which is described so far. We are now ready to present the following theorem.

\begin{theorem}\label{thm1}
There always exists at least one multipartite entangled state, irrespective of the structure of the composite quantum system, such that any nontrivial superposition of the state with an arbitrary completely product state always yields a multipartite entangled state. This is in sharp contrast with the bipartite systems.
\end{theorem}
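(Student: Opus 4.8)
The plan is to reduce the statement to the two regimes that have already been settled earlier in the paper: the high-dimensional regime covered by Proposition~\ref{prop_a}, and the single exceptional three-qubit regime covered by Proposition~\ref{prop:prop2}. First I would fix an arbitrary genuinely multipartite Hilbert space $\mathbb{C}^{d_1}\otimes\cdots\otimes\mathbb{C}^{d_n}$ with $n\ge 3$ and each $d_i\ge 2$, and record the elementary bound $D=d_1\cdots d_n\ge 2^n\ge 8$. The decisive combinatorial observation is that $D=8$ can occur for such a space only for three qubits: $D=8$ together with $D\ge 2^n$ gives $2^n\le 2^3$, hence $n\le 3$, and combined with $n\ge 3$ this forces $n=3$; then $d_1d_2d_3=8$ with each $d_i\ge 2$ forces $d_1=d_2=d_3=2$. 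Every multipartite space other than $\mathbb{C}^2\otimes\mathbb{C}^2\otimes\mathbb{C}^2$ therefore has $D>8$.

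For the regime $D>8$ I would simply invoke Proposition~\ref{prop_a}, which already produces a multipartite entangled state (one of Schmidt rank at least three in some bipartition) whose nontrivial superposition with any completely product state stays entangled. The only remaining case is the three-qubit space, where the Schmidt rank is capped at two in every bipartition and Proposition~\ref{prop_a} no longer applies. There I would exhibit an explicit state meeting all the itemized hypotheses of Proposition~\ref{prop:prop2}, for instance $\ket{\Psi}=c\,\ket{0}\big(s\ket{00}+t\ket{11}\big)+d\,\ket{1}\ket{01}$ with $c,d,s,t$ real and nonzero. In the $A{:}BC$ bipartition this is already a Schmidt decomposition, with $\ket{\Phi}=s\ket{00}+t\ket{11}$ entangled and $\ket{\Phi^\perp}=\ket{01}$ a product state orthogonal to both Schmidt product terms $\ket{00}$ and $\ket{11}$. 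Proposition~\ref{prop:prop2} then certifies that this state is unconditionally superposition-robust, which completes the existence claim over every multipartite Hilbert space.

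To establish the contrast with bipartite systems I would turn to the family $\mathbb{C}^2\otimes\mathbb{C}^d$. Any entangled state there has Schmidt rank at most $\min(2,d)=2$, so by the bipartite analysis of \cite{sep+ent} recalled in the introduction, no entangled state of $\mathbb{C}^2\otimes\mathbb{C}^d$ can be unconditionally superposition-robust: for each such state one can always find a product state whose nontrivial superposition with it is again a product state. Hence $\mathbb{C}^2\otimes\mathbb{C}^d$ is a bipartite space containing \emph{no} unconditionally robust state, in sharp contrast to the universal existence just proved on the multipartite side.

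I expect the only mildly delicate points to be the bookkeeping that singles out $\mathbb{C}^2\otimes\mathbb{C}^2\otimes\mathbb{C}^2$ as the unique multipartite space escaping Proposition~\ref{prop_a}, and the verification that the exhibited three-qubit state genuinely satisfies every hypothesis of Proposition~\ref{prop:prop2} --- in particular that $\ket{\Phi^\perp}=\ket{01}$ is orthogonal to both product terms of the \emph{Schmidt} decomposition of $\ket{\Phi}$, not merely to $\ket{\Phi}$ itself. Both are routine once the dimension count is in place, so the conceptual content has effectively already been supplied by Propositions~\ref{prop_a} and~\ref{prop:prop2}; the theorem is in essence an assembly of those two results together with the $D>8$ versus three-qubit case split.
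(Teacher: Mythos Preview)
Your proposal is correct and follows essentially the same route as the paper: the paper's proof also splits into the three-qubit case (handled by Proposition~\ref{prop:prop2}) and all other multipartite spaces (handled by Proposition~\ref{prop_a}), and establishes the bipartite contrast via the Schmidt-rank-two observation in $\mathbb{C}^2\otimes\mathbb{C}^d$. Your version is slightly more explicit in the dimension count singling out three qubits and in exhibiting a concrete state for that case, but the logical structure is identical.
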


\noindent
{\bf Note:} This theorem suggests that `unconditional inseparability of superposition' (as defined in this work) exists in all multipartite Hilbert spaces. But in case of bipartite systems this is not true in general. Let us now proceed towards the proof.

\begin{proof}
From Proposition \ref{prop:prop2}, it is evident that one can get unconditional inseparability of superposition in the minimum dimensional multipartite Hilbert space. If the dimension or the number of parties is increased , then, there are states on which one can apply the result of \cite{sep+ent}. For example in $\mathbb{C}^3\otimes\mathbb{C}^2\otimes\mathbb{C}^2$,  we can have states like $\ket{000}+\ket{101}+\ket{210}$ which has schmidt rank 3 in A:BC bipartion . Also in $\mathbb{C}^2\otimes\mathbb{C}^2\otimes\mathbb{C}^2\otimes\mathbb{C}^2$,  we can have states like $\ket{0000}+\ket{0101}+\ket{1010}$ which again have schmidt rank 3 in AB:CD bipartition. This is how one can obtain entanglement after superposing with an arbitrary completely product state. Such examples are also explained at the beginning of this section (Sec.~\ref{results_n_discussions}). In particular, see Proposition \ref{prop_a}. In this way, we get the proof for the first part of the theorem.

On the other hand, in the bipartite scenario when one of the subsystems is a qubit, the existence of the Schmidt decomposition guarantees the following. For any entangled state, one can always obtain a product state as an output by appropriately selecting the product state in the superposition. This is because in the Schmidt decomposition of a qubit-qudit pure entangled state there can be only two product states. Thus, unconditional inseparability of superposition is not possible when the quantum system corresponds to the Hilbert space, $\mathcal{H} = \mathbb{C}^2\otimes\mathbb{C}^d$. This completes the proof.   
\end{proof}

Three-qubit case is the only case where it is not possible to get any bipartition where we can accommodate states with Schmidt rank three or higher. Apart from this, any multipartite configuration we consider, there will be at least one bipartition where we can accommodate states with Schmidt rank three or higher. Then one can apply a result of \cite{sep+ent}, see Proposition \ref{prop_a}.

\subsection{Bi-separable states replacing the completely product states}\label{subsec:Inseparability of states when superposed with bi-separable state}
Here we consider the superposition of a pure multipartite entangled state (preferably a genuinely entangled state) and a bi-separable state. Then, we ask when the state after superposition is always entangled, i.e., not a completely product state. We mention that a pure state is genuinely entangled state if it is entangled across every bipartition. Furthermore, a pure bi-separable state is a state which is neither a completely product state nor a genuinely entangled state.

A straightforward exemplar elucidates how any state $\ket{\Psi}$, articulated by the form in Eq.~(\ref{eq:eqn1}), can yield a completely product state after undergoing a nontrivial superposition with a bi-separable state $\ket{p} = \ket{\alpha}\ket{\chi}$, product in the A:BC bipartition. This can be understood as the following. Consider a bi-separable state $\ket{p} = \ket{\alpha}\ket{\Phi}$, where $\ket{\Phi}$ is same entangled state as in Eq.~(\ref{eq:eqn1}). Upon superposing the state $\ket{\Psi}$ with such a bi-separable state, we obtain $a_{1}(c\ket{\alpha}\ket{\Phi}+ d\ket{\alpha^{\perp}}\ket{\Phi^{\perp}}) + a_2\ket{\alpha}\ket{\Phi}$. By choosing $a_1, a_2, c$ such that $a_1c = -a_{2}$, we achieve the output state as $\ket{\alpha^{\perp}}\ket{\Phi^{\perp}}$, which is a completely product state, provided $\ket{\Phi^\perp}$ is a product state. Notice that bi-orthogonality of $\ket{\Phi}$ and $\ket{\Phi^{\perp}}$ is not required for the existence of such examples. We mention that in this subsection, we consider bi-separable states that are product in A:BC bi-partition only.

Now, we modify the question a bit. Does there exist a bi-separable state $\ket{p}=\ket{\alpha}\ket{\chi}$ such that the entangled state $\ket{\chi}$, not proportional to $\ket{\Phi}$ of Eq.~(\ref{eq:eqn1}), still a nontrivial superposition of $\ket{\Psi}$ and $\ket{p}$ results in a completely product state?
 
\begin{proposition}\label{prop:prop4}
There may exist a bi-separable state $\ket{\eta}\ket{\chi}$, where $\ket{\chi}$ is an entangled state, such that when superposed with $\ket{\Psi}$ of the form given in Eq.~(\ref{eq:eqn1}), the state after superposition is a completely product state. In other words, the state $a\ket{\Psi}+b\ket{\eta}\ket{\chi}$ might be a state of the form $\ket{\eta^\prime}\ket{\beta^\prime}\ket{\gamma^\prime}$, where $|a|^2+|b|^2 = 1$, $|a|, |b| \neq 0$.
\end{proposition}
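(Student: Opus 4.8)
The plan is to prove this existence statement by exhibiting an explicit example, exactly as in Proposition~\ref{prop:prop1}, the one genuinely new requirement being that the bipartite factor $\ket{\chi}$ of the bi-separable state be \emph{entangled} rather than product. First I would take $\ket{\eta}=\ket{\alpha}$, so that the superposition aligns with the $A$:$BC$ Schmidt structure of $\ket{\Psi}$. Writing out
\[
a\ket{\Psi}+b\ket{\alpha}\ket{\chi}=\ket{\alpha}\big(ac\ket{\Phi}+b\ket{\chi}\big)+ad\ket{\alpha^{\perp}}\ket{\Phi^{\perp}},
\]
I would demand that the $BC$-vector multiplying $\ket{\alpha}$ collapse onto $\ket{\Phi^{\perp}}$, i.e.\ $ac\ket{\Phi}+b\ket{\chi}=\kappa\ket{\Phi^{\perp}}$ for some scalar $\kappa$. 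When this holds the output becomes $(\kappa\ket{\alpha}+ad\ket{\alpha^{\perp}})\ket{\Phi^{\perp}}$; since $\ket{\Phi^{\perp}}$ is itself product across $B$:$C$, this is a completely product state, which is precisely what we want.

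The above condition forces $\ket{\chi}\propto\kappa\ket{\Phi^{\perp}}-ac\ket{\Phi}$, so the whole question reduces to choosing the data $(\ket{\Phi},\ket{\Phi^{\perp}},a,c,\kappa)$ so that this particular linear combination of the entangled state $\ket{\Phi}$ and the product state $\ket{\Phi^{\perp}}$ is itself entangled (and, in particular, not proportional to $\ket{\Phi}$, which is guaranteed as soon as $\kappa\neq0$). This is the heart of the matter and the point where the present proposition departs from Proposition~\ref{prop:prop1}: there the analogous $BC$-vector had to be product, whereas here it must be entangled. Since a generic linear combination of an entangled and a product two-qubit state is entangled, I expect this constraint to be easy to satisfy, and the only (minor) obstacle is to pin down an explicit choice and then read off $a,b$ from the normalizations $|a|^{2}+|b|^{2}=1$ and $\|\ket{\chi}\|=1$ (the latter giving $|b|^{2}=|\kappa|^{2}+|ac|^{2}$ by orthogonality of $\ket{\Phi},\ket{\Phi^{\perp}}$).

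To make it concrete I would take, for instance, $\ket{\Phi}=\tfrac{1}{\sqrt{2}}(\ket{00}+\ket{11})$ and $\ket{\Phi^{\perp}}=\ket{01}$ (orthogonal, with $\ket{\Phi^{\perp}}$ product). Then $\kappa\ket{\Phi^{\perp}}-ac\ket{\Phi}$ has coefficient matrix $\left(\begin{smallmatrix}-ac/\sqrt{2} & \kappa \\ 0 & -ac/\sqrt{2}\end{smallmatrix}\right)$ in the computational basis, whose determinant $(ac)^{2}/2$ is nonzero whenever $ac\neq0$; hence $\ket{\chi}$ is entangled and not proportional to $\ket{\Phi}$, so $\ket{\alpha}\ket{\chi}$ is a legitimate bi-separable state. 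Substituting back, fixing the normalizations, and verifying directly that $(\kappa\ket{\alpha}+ad\ket{\alpha^{\perp}})\ket{\Phi^{\perp}}$ is completely product then finishes the example and, with it, the proof. I would also remark that the chosen $\ket{\Psi}$ even meets the stronger hypotheses of Proposition~\ref{prop:prop2}, so the example shows that a state which is unconditionally superposition-robust against completely product states can nevertheless be driven to a product state by a bi-separable one.
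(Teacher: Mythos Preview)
Your proposal is correct and follows essentially the same route as the paper: set $\ket{\eta}=\ket{\alpha}$, force the $\ket{\alpha}$-component of the superposition onto $\ket{\Phi^{\perp}}$, and read off $\ket{\chi}$ as a linear combination of $\ket{\Phi}$ and $\ket{\Phi^{\perp}}$ that must then be checked to be entangled. Your determinant check for the entanglement of $\ket{\chi}$ and your closing remark linking the example to Proposition~\ref{prop:prop2} are slight improvements over the paper, which instead fixes $b=-a$ and verifies entanglement of $\ket{\chi}$ by inspecting explicit examples (one of which, with $\ket{\Phi^{\perp}}=\ket{01}$, coincides with yours).
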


\noindent
{\bf Note:} Here we assume that $\ket{\Phi}$ is an entangled state and $\ket{\Phi^\perp}$ is a product state in Eq.~(\ref{eq:eqn1}) and $\langle\Phi|\Phi^\perp\rangle = 0$. Again, $\ket{\chi}$ is not proportional to $\ket{\Phi}$.

\begin{proof}
Let us assume that the bi-separable state is of the form $\ket{p} = \ket{\alpha}_{A}\ket{\chi}_{BC}$. Then, we consider nontrivial superposition of $\ket{p}$ with the state of Eq.~(\ref{eq:eqn1}), i.e.,
\begin{equation*}
\begin{aligned}
a\ket{\Psi} + b\ket{p} &= a(c\ket{\alpha}\ket{\Phi} + d\ket{\alpha^\perp}\ket{\Phi^\perp}) + b\ket{\alpha}\ket{\chi}\\
                       &= \ket{\alpha}(ac\ket{\Phi} + b\ket{\chi}) + ad\ket{\alpha^\perp}\ket{\Phi^\perp}.
\end{aligned}
\end{equation*}
For the purpose of constructing an example, let $b=-a$, such that appropriate normalization factor can be inserted in the superposition. We assume that in A:BC bipartition the state $a\ket{\Psi} + b\ket{p}$ is a product state, this implies that $c\ket{\Phi} -\ket{\chi} = d\ket{\Phi^{\perp}}$. Therefore, from such an example, we can observe that states of the form as given in Eq.~(\ref{eq:eqn1}), when superposed with bi-separable state, can output completely product state. Hence, the proof is complete.
\end{proof}
Relevant examples in support of the above proposition are given as the following.

\begin{example}
Consider $\ket{\Phi}=\frac{1}{\sqrt{2}}(\ket{00}-\ket{11})$ and $\ket{\chi}=(\frac{c}{\sqrt{2}}+\frac{d}{2})\ket{00}+\frac{d}{2}\ket{01}+\frac{d}{2}\ket{10}+(\frac{-c}{\sqrt{2}}+\frac{d}{2})\ket{11}$. It can be seen easily that, for such $\ket{\Phi}$ and $\ket{\chi}$, $c\ket{\Phi} -\ket{\chi}$ turns out to be proportional to $\ket{++}$, orthogonal to $\ket{\Phi}$. 
\end{example}

Let us take another example related to above proposition.
\begin{example}
$\ket{\Phi} = (1/\sqrt{2})(\ket{00} + \ket{11})$ and $\ket{\chi} = (1/\sqrt{3})(\ket{00} + \ket{11} + \ket{01})$. Then, for $(c\ket{\Phi} - \ket{\chi})$, we choose $c = \sqrt{2}/\sqrt{3}$, and we get, $(c\ket{\Phi} - \ket{\chi})$ = $\ket{01}$. This state is clearly orthogonal (and also bi-orthogonal) to $\ket{\Phi}$.
\end{example}
However, if we make the following assumption that $\ket{\Phi}$ and $\ket{\chi}$ to be bi-orthogonal entangled states, in addition to previously mentioned assumptions then, we have the following proposition.

\begin{proposition}\label{prop:prop5}
Under all of the assumptions corresponding to the state $\ket{\Psi}$, the three-qubit entangled state when superposed with a bi-separable state, always outputs an entangled state.
\end{proposition}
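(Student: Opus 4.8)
The plan is to follow the same three-case template as Proposition \ref{prop:prop2}, but first to extract the structural consequence of the new bi-orthogonality hypothesis, since that is what drives the whole argument. Writing $\ket{\Phi}$ in its Schmidt form $\ket{\Phi}=s\ket{00}+t\ket{11}$ with $s,t$ real and nonzero, the product states appearing in $\ket{\Phi}$ are $\ket{00}$ and $\ket{11}$. The earlier assumption forces $\ket{\Phi^\perp}$ to be a product state orthogonal to both of these, so $\ket{\Phi^\perp}\in\{\ket{01},\ket{10}\}$; take $\ket{\Phi^\perp}=\ket{01}$ without loss of generality. The new assumption that $\ket{\chi}$ is entangled and bi-orthogonal to $\ket{\Phi}$ forces each product state in the Schmidt decomposition of $\ket{\chi}$ to be orthogonal to both $\ket{00}$ and $\ket{11}$; in two qubits the only product states with this property are $\ket{01}$ and $\ket{10}$, so $\ket{\chi}=s'\ket{01}+t'\ket{10}$ with $s',t'$ nonzero. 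The upshot I would record explicitly is that the two-qubit space $BC$ splits into orthogonal subspaces $V=\mathrm{span}\{\ket{00},\ket{11}\}$ and $V'=\mathrm{span}\{\ket{01},\ket{10}\}$, with $\ket{\Phi}\in V$ while both $\ket{\Phi^\perp},\ket{\chi}\in V'$.

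Next I would take a general bi-separable state that is product in the $A$:$BC$ cut, namely $\ket{p}=(x\ket{\alpha}+y\ket{\alpha^\perp})\ket{\chi}$ with $x,y$ not both zero, which subsumes the three cases (i) $y=0$, (ii) $x=0$, (iii) both $x,y$ nonzero, exactly as in Proposition \ref{prop:prop2}. Forming the nontrivial superposition and regrouping across the $A$:$BC$ cut gives $a\ket{\Psi}+b\ket{p}=\ket{\alpha}\ket{u}+\ket{\alpha^\perp}\ket{w}$, where $\ket{u}=ac\ket{\Phi}+bx\ket{\chi}$ and $\ket{w}=ad\ket{\Phi^\perp}+by\ket{\chi}$. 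Because $\ket{\alpha}$ and $\ket{\alpha^\perp}$ are orthogonal, this state is product in the $A$:$BC$ cut if and only if $\ket{u}$ and $\ket{w}$ are proportional, with the degenerate possibility that one of them vanishes; so the goal reduces to ruling that out.

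The key observation is the obstruction coming from the orthogonal split. Since $\ket{\chi}\in V'$, the component of $\ket{u}$ in $V$ is exactly $ac\ket{\Phi}$, whose $\ket{00}$-amplitude is $acs\neq0$; by contrast $\ket{w}$ lies entirely in $V'$ and so has vanishing $\ket{00}$-amplitude. Moreover $\ket{w}=(ad+bys')\ket{01}+byt'\ket{10}$ cannot be zero, since its $\ket{10}$-amplitude $byt'$ is nonzero whenever $y\neq0$ (as $b,t'\neq0$), while for $y=0$ one simply has $\ket{w}=ad\ket{01}\neq0$. Hence $\ket{u}=\lambda\ket{w}$ would force $acs=\lambda\cdot0=0$, a contradiction, and likewise $\ket{u}\neq0$ because its $\ket{00}$-amplitude is nonzero; thus $\ket{u}$ and $\ket{w}$ are never proportional. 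Therefore the output is entangled already in the $A$:$BC$ bipartition, hence in particular not a completely product state, which by Remark \ref{remark:remark1} is exactly the claim. I expect the only real obstacle to be the first step, namely correctly reading off from bi-orthogonality that $\ket{\chi}$ and $\ket{\Phi^\perp}$ are confined to $V'$ while $\ket{\Phi}$ occupies the complementary subspace $V$; once this orthogonal-subspace picture is in place, the proportionality obstruction is immediate and uniform across all three cases, so no separate case-by-case computation is really needed.
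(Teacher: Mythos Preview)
Your proof is correct and follows essentially the same approach as the paper: both reduce $\ket{\chi}$ to the form $s'\ket{01}+t'\ket{10}$ via bi-orthogonality, regroup the superposition across the $A{:}BC$ cut, and conclude by showing that the two $BC$-pieces $ac\ket{\Phi}+bx\ket{\chi}$ and $ad\ket{\Phi^\perp}+by\ket{\chi}$ are linearly independent. The paper simply asserts this last step as ``easy to check''; your orthogonal-subspace observation (that $\ket{u}$ has a nonzero $V$-component while $\ket{w}\in V'$ and $\ket{w}\neq0$) is exactly the justification that makes it so, and it handles all three cases uniformly rather than separately.
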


\noindent
{\bf Note:} In brief we mention the assumptions corresponding to $\ket{\Psi}$. The form of $\ket{\Psi}$ is given in Eq.~(\ref{eq:eqn1}). The state $\ket{\Phi}$ is an entangled state and $\ket{\Phi^\perp}$ is a product state, $\langle\Phi|\Phi^\perp\rangle = 0$. Furthermore, $\ket{\Phi^\perp}$ is orthogonal to the product states that appear in the Schmidt decomposition of $\ket{\Phi}$. There is also restriction on the form of $\ket{\chi}$. The product states that appear in the Schmidt decomposition of $\ket{\chi}$ are orthogonal to that of $\ket{\Phi}$.

\begin{proof}
We consider a three qubit state $\ket{\Psi}$ having the form given in Eq.~(\ref{eq:eqn1}), i.e., $\ket{\Psi}=c\ket{\alpha}\ket{\Phi}+ d\ket{\alpha^{\perp}}\ket{\Phi^{\perp}}$, where $c,d$ are real, nonzero numbers and $c^2+d^2=1$. We can write the entangled state $\ket{\Phi}$ in the Schmidt form as $\ket{\Phi} = s\ket{00} + t\ket{11}$, where $s,t$ are real, nonzero numbers and $s^2+t^2=1$. So, the product state $\ket{\Phi^{\perp}}$ can be either $\ket{01}$ or $\ket{10}$. Next, we consider a bi-separable state $\ket{p}$. We consider the form of $\ket{p}$ as $(x\ket{\alpha}+y\ket{\alpha^\perp})\ket{\chi}$, $|x|\geq0$, $|y|\geq0$, and $|x|^2+|y|^2=1$. $\ket{\chi}$ is an entangled state. Because of the bi-orthogonality of $\ket{\Phi}$ and $\ket{\chi}$, we can assume $\ket{\chi}$ to be of the form $p\ket{01}+q\ket{10}$, where $p,q$ are real, nonzero numbers and $p^2+q^2=1$. 

Now, we consider the following:
\begin{equation*}
\begin{aligned}
& a\ket{\Psi} + b\ket{p} & \\
&= a(c\ket{\alpha}\ket{\Phi} + d\ket{\alpha^\perp}\ket{\Phi^\perp}) + b(x\ket{\alpha}+y\ket{\alpha^\perp})\ket{\chi} &\\
&= \ket{\alpha}(ac\ket{\Phi} + bx\ket{\chi}) + \ket{\alpha^\perp}(ad\ket{\Phi^\perp} + by\ket{\chi}), &
\end{aligned}
\end{equation*}
where `$|a|$' and `$|b|$' are nonzero and $|a|^2+|b|^2=1$. According to our assumptions, it is easy to check that $(ac\ket{\Phi} + bx\ket{\chi})$ and $(ad\ket{\Phi^\perp} + by\ket{\chi})$ are linearly independent. Therefore, the output state is entangled in at least
one bipartition. This completes the proof.
\end{proof}

It is essential to mention here the following important point. In the superposition $a\ket{\Psi} + b\ket{p}$, when we consider $\ket{p}$ as a completely product state, we have not put any assumption on the form of $\ket{p}$. In this way, we have achieved unconditional inseparability of superposition for certain class of multipartite entangled states $\ket{\Psi}$. On the other hand, when $\ket{p}$ is a bi-separable state, we have put additional assumption on the form of $\ket{p}$. We say this as weaker scenario to achieve entanglement after superposition. We are now ready to discuss some applications of our theory.

\subsection{Applications}\label{sec:Multipartite UEB and unambiguous local discrimination of quantum states}
We first recall the definition of unextendible entangled basis (UEB) from Refs.~\cite{HALDER2021168550, SJ2011, SU2022}. 
\begin{definition}
A UEB is defined as a collection of mutually orthonormal pure entangled states within a composite Hilbert space, such that the orthogonal complement of the subspace, spanned by the considered entangled states, contains no entangled states.
\end{definition}
In other words, the UEB constitutes a set of pure entangled states for which any additional pure state orthogonal to the states of this set must be a product state. We now focus on unextendible entangled bases (UEBs) that consist of genuinely entangled pure states. Furthermore, the complementary subspace here includes only completely product state. 

Let us give a construction of a special UEB in $\mathbb{C}^{3}\otimes\mathbb{C}^{3}\otimes\mathbb{C}^{3}$ which we say a 3-UEB. This construction is inspired from a bipartite construction, given in \cite{sep+ent}. We first identify eighteen entangled states $\{\ket{\psi_1},\dots,\ket{\psi_{18}}\}$ which have Schmidt rank three in at least one bipartition. These states are the states of the set, \{$\ket{\psi_1}$, $\ket{\psi_2}$, $\ket{\psi_3}$\}, belonging to the subspace spanned by the states in \{$\ket{000}$, $\ket{011}$, $\ket{122}$\}, the states in \{$\ket{\psi_4}$, $\ket{\psi_5}$, $\ket{\psi_6}$\}, belonging to the subspace spanned by \{$\ket{100}$, $\ket{111}$, $\ket{022}$\}, the states in \{$\ket{\psi_7}$, $\ket{\psi_8}$, $\ket{\psi_9}$\}, belonging to the subspace spanned by \{$\ket{001}$, $\ket{012}$, $\ket{120}$\}, the states in \{$\ket{\psi_{10}}$, $\ket{\psi_{11}}$, $\ket{\psi_{12}}$\}, belonging to the subspace spanned by \{$\ket{101}$, $\ket{112}$, $\ket{020}$\}, the states in \{$\ket{\psi_{13}}$, $\ket{\psi_{14}}$, $\ket{\psi_{15}}$\}, belonging to the subspace spanned by \{$\ket{002}$, $\ket{010}$, $\ket{121}$\}, and the states in \{$\ket{\psi_{16}}$, $\ket{\psi_{17}}$, $\ket{\psi_{18}}$\}, belonging to the subspace spanned by \{$\ket{102}$, $\ket{110}$, $\ket{021}$\}. In fact, these eighteen states can be orthogonal to each other. We can give an example for illustration -- $\ket{\psi_1}, \ket{\psi_2}, \ket{\psi_3}$ can be of the forms $(1/\sqrt{3})(\ket{000}+\ket{011}+\ket{122})$, $(1/\sqrt{3})(\ket{000}+\omega\ket{011}+\omega^2\ket{122})$, $(1/\sqrt{3})(\ket{000}+\omega^2\ket{011}+\omega\ket{122})$, where $\omega$ and $\omega^2$ are cubic roots of unity. The other states $\{\ket{\psi_4},\dots,\ket{\psi_{18}}\}$ can also be constructed in the same way. Next, we consider the product states \(\ket{200}\), \(\ket{211}\), \(\ket{201}\), \(\ket{220}\), \(\ket{210}\), \(\ket{221}\), \(\ket{222}\), \(\ket{212}\), \(\ket{202}\). These product states are also orthogonal to the previously mentioned eighteen entangled states. We now present the final basis in $\mathbb{C}^{3}\otimes\mathbb{C}^{3}\otimes\mathbb{C}^{3}$. It consists of the states: \{$\frac{1}{\sqrt{2}}(\ket{\psi_1}\pm\ket{200})$,
$\frac{1}{\sqrt{2}}(\ket{\psi_2}\pm\ket{211})$,
$\frac{1}{\sqrt{2}}(\ket{\psi_3}\pm\ket{201})$,
$\frac{1}{\sqrt{2}}(\ket{\psi_4}\pm\ket{220})$,
$\frac{1}{\sqrt{2}}(\ket{\psi_5}\pm\ket{210})$,
$\frac{1}{\sqrt{2}}(\ket{\psi_6}\pm\ket{221})$,
$\ket{\psi_7}$, $\ket{\psi_8}$,$\ket{\psi_9}$, $\ket{\psi_{10}}$, $\ket{\psi_{11}}$, $\ket{\psi_{12}}$, $\ket{\psi_{13}}$, $\ket{\psi_{14}}$, $\ket{\psi_{15}}$, $\ket{\psi_{16}}$, $\ket{\psi_{17}}$, $\ket{\psi_{18}}$, $\ket{222}$, $\ket{212}$, $\ket{202}$\}. Clearly, the first twenty-four states form a 3-UEB in $\mathbb{C}^{3}\otimes\mathbb{C}^{3}\otimes\mathbb{C}^{3}$ with the property that the entangled states have Schmidt rank three in at least one bipartition. Next, we proceed to prove a type of indistinguishability exhibited by this 3-UEB under LOCC. For this we basically consider state discrimination problem under LOCC with the unambiguous setting. In the following, it is described in a greater details. We note here that for this indistinguishability property incomplete sets are particularly important. For a complete set, finding such indistinguishability is trivial.

The local indistinguishability property via the unambiguous setting of state discrimination problem under LOCC, in which we are interested, can be described as the following.
\begin{definition}{Local Indistinguishability:}
Suppose, a composite quantum system is prepared in a particular state. This state is secretly chosen from a given set of orthogonal pure states which does not span a whole Hilbert space. Now we want to find such an incomplete set for which the state of the quantum system cannot be locally identified unambiguously with nonzero probability irrespective of the fact that in which state of the set the system is prepared. 
\end{definition}

The 3-UEB that we have constructed here is such an incomplete set. The proof of this directly follows from \cite{sep+ent} (in particular, see Theorem 3 of \cite{sep+ent}). For completeness, we add a few steps. To identify the state of a quantum system unambiguously by LOCC, it is necessary and sufficient to find a completely product state which should be nonorthogonal to the state to be identified but orthogonal to the other states of the given set \cite{CHEFLES2004}. Now, we consider a state taken from the aforesaid 3-UEB. In this state, a quantum system is prepared. It is required to identify the state unambiguously by LOCC. Clearly, for this, it is required to find a completely product state in the space spanned by the chosen state and the product states $\{\ket{222},~ \ket{212},~ \ket{202}\}$, which is nonorthogonal to the chosen state. Now, any linear combination of the states $\{\ket{222},~ \ket{212},~ \ket{202}\}$ produces another completely product state which always produces entangled state if superposed with the chosen state. This is because the chosen state has Schmidt rank three in at least one bipartition \cite{sep+ent}. So, it is impossible to identify the state of the quantum system unambiguously by LOCC. In fact, this is true for all states of the 3-UEB.

However, the states within a UEB in $\mathbb{C}^{2}\otimes\mathbb{C}^{2}\otimes\mathbb{C}^{2}$ cannot have Schmidt rank three or higher across any bipartition. So, apparently, if we stick to the results of \cite{sep+ent}, then, it is not possible to find an incomplete set in $\mathbb{C}^{2}\otimes\mathbb{C}^{2}\otimes\mathbb{C}^{2}$ for which the aforesaid indistinguishability property can be obtained. Here comes the importance of present analysis. We want to show that the present analysis of unconditional inseparability of superposition is useful to obtain the aforesaid indistinguishability property for certain sets in $\mathbb{C}^{2}\otimes\mathbb{C}^{2}\otimes\mathbb{C}^{2}$. For example, we consider the following W-UEB consisting of W-type genuinely entangled states \cite{HALDER2021168550}:
\begin{equation}
\begin{array}{l}
\frac{1}{\sqrt{3}}(\ket{001}+\ket{010}+\ket{100}),\\[1.5 ex]
\frac{1}{\sqrt{3}}(\ket{001}+\omega\ket{010}+\omega^2\ket{100}),\\[1.5 ex]
\frac{1}{\sqrt{3}}(\ket{001}+\omega^2\ket{010}+\omega\ket{100}),\\[1.5 ex]
\frac{1}{\sqrt{3}}(\ket{000}+\ket{101}+\ket{110}),\\[1.5 ex]
\frac{1}{\sqrt{3}}(\ket{000}+\omega\ket{101}+\omega^2\ket{110}),\\[1.5 ex]
\frac{1}{\sqrt{3}}(\ket{000}+\omega^2\ket{101}+\omega\ket{110}),
\end{array}
\end{equation} 
where $\omega$ and $\omega^2$ are cubic root of unity. Observe that each state in the aforementioned example of the W-UEB satisfies all the assumptions for $\ket{\Psi}$ associated with Proposition \ref{prop:prop2}. The complementary subspace of the completely product states is spanned by the states \{$\ket{011}$, $\ket{111}$\}. Furthermore, there can be many completely product states in this two-dimensional subspace. So, it is difficult to define the explicit form of an arbitrary state picked from this subspace. This W-UEB is an example of an incomplete set which possess the aforesaid indistinguishability property. More generally, we prove the following theorem.

\begin{theorem}\label{thm2}
We consider any UEB in $\mathbb{C}^{2}\otimes\mathbb{C}^{2}\otimes\mathbb{C}^{2}$ that includes entangled states, where each state satisfies all previously specified assumptions for $\ket{\Psi}$, associated with Proposition \ref{prop:prop2}. Then, such a UEB does not contain any state that is locally unambiguously identifiable.
\end{theorem}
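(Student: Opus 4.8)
The plan is to combine the unambiguous-discrimination criterion of \cite{CHEFLES2004} with the unconditional inseparability of superposition established in Proposition \ref{prop:prop2}. Recall that a state $\ket{\psi_i}$ of an orthogonal set is locally unambiguously identifiable if and only if there exists a completely product state $\ket{p}$ that is orthogonal to every other state $\ket{\psi_j}$ $(j\neq i)$ of the set while being nonorthogonal to $\ket{\psi_i}$. Hence the entire theorem reduces to showing that, for each state of the UEB, no such product state $\ket{p}$ can exist.

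First I would fix an arbitrary state $\ket{\psi_i}$ of the UEB and identify the subspace in which any candidate $\ket{p}$ must lie. Writing the UEB entangled states as an orthonormal family spanning a subspace $S$, and letting $W=S^{\perp}$ be the complementary subspace, the defining property of a UEB guarantees that $W$ contains no entangled state, i.e., every nonzero vector of $W$ is completely product. The set of vectors orthogonal to all $\ket{\psi_j}$ with $j\neq i$ is then exactly $\mathrm{span}\{\ket{\psi_i}\}\oplus W$, so any admissible $\ket{p}$ can be written as $\ket{p}=a\ket{\psi_i}+\ket{w}$ with $\ket{w}\in W$.

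Next I would exploit the two constraints imposed on $\ket{p}$. Nonorthogonality to $\ket{\psi_i}$ forces $a=\langle\psi_i|p\rangle\neq0$. If $\ket{w}=0$ then $\ket{p}\propto\ket{\psi_i}$ is entangled, which is not allowed; therefore $\ket{w}\neq0$, and being a nonzero element of $W$ it is a completely product state. Consequently $\ket{p}=a\ket{\psi_i}+\ket{w}$ is a \emph{nontrivial} superposition (both coefficients nonzero) of the entangled state $\ket{\psi_i}$ with a completely product state. Since every state of the UEB is assumed to satisfy the hypotheses of Proposition \ref{prop:prop2}, that proposition applies and forces $\ket{p}$ to be entangled, contradicting the requirement that $\ket{p}$ be completely product. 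No valid $\ket{p}$ therefore exists, so $\ket{\psi_i}$ is not locally unambiguously identifiable, and since $i$ was arbitrary the conclusion holds for every state of the UEB.

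I expect the main obstacle to be the clean structural reduction carried out in the second step rather than any hard computation: one must verify that the orthogonal complement of the remaining states decomposes precisely as $\mathrm{span}\{\ket{\psi_i}\}\oplus W$ and that the complementary component $\ket{w}$ inherits complete productness from the UEB structure, so that the superposition genuinely falls within the scope of Proposition \ref{prop:prop2}. Once this identification is secure, the contradiction is immediate; the only remaining point requiring care is ensuring the superposition is nontrivial, which is exactly what the nonorthogonality condition on $\ket{p}$ supplies.
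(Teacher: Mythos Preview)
Your proposal is correct and follows essentially the same route as the paper: invoke the Chefles criterion from \cite{CHEFLES2004}, observe that any candidate product witness must lie in $\mathrm{span}\{\ket{\psi_i}\}\oplus W$, use the UEB property that every nonzero vector of $W$ is completely product, and then apply Proposition~\ref{prop:prop2} to obtain a contradiction. Your write-up is in fact slightly more explicit than the paper's in isolating the decomposition $\ket{p}=a\ket{\psi_i}+\ket{w}$ and in handling the edge cases $a=0$ and $\ket{w}=0$, but the underlying argument is identical.
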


\noindent
{\bf Note:} The above theorem depicts that if a quantum system is prepared in any state which is secretly chosen from such a UEB, then it is not possible to identify the state of the system unambiguously by LOCC with nonzero probability. We also mention that all pure states of the complementary subspace of such a UEB, are completely product. 

\begin{proof}
Let a state $\ket{\psi}$ is picked from such a UEB, i.e., $\ket{\psi}$ has the properties of $\ket{\Psi}$ as mentioned in Proposition \ref{prop:prop2} and \{$\ket{p_1}$, $ \ket{p_2}$, \ldots, $\ket{p_n}$\} be the completely product basis for the complementary subspace of the UEB. By the definition of present UEBs, any linear combination of these completely product states remains a completely product state. Therefore, for any state $\ket{p}$ in the span of the complementary subspace, the state $\ket{\phi} = a_0\ket{\psi} + a_1\ket{p}$, where $a_0$ and $a_1$ are nonzero complex numbers and $\langle \phi | \phi \rangle = 1$, always be an entangled state and never be a completely product state. This is because of Proposition \ref{prop:prop2}. Thus, no completely product state can be found that is nonorthogonal to $\ket{\psi}$ while being orthogonal to all other states of the considered UEB. But it is known to be necessary for unambiguous local identification with nonzero probability \cite{CHEFLES2004}. This is already mentioned during the discussion of 3-UEB. This completes the proof.
\end{proof}

\begin{corollary}\label{coro}
In general, be it three-qubit system or a system with higher dimension, if a UEB contains a state associated with Proposition \ref{prop:prop2} or a state which has Schmidt rank three (or higher) in at least one bipartition, then the state is not locally unambiguously identifiable with nonzero probability.
\end{corollary}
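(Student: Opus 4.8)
The plan is to reduce the statement to the necessary-and-sufficient criterion for unambiguous local identification established in \cite{CHEFLES2004}, exactly as in the proof of Theorem~\ref{thm2}, but now allowing two admissible hypotheses on the chosen state and arbitrary local dimensions. First I would fix the state $\ket{\psi}$ picked from the UEB and let $\{\ket{p_1},\ldots,\ket{p_n}\}$ be a completely product basis of the complementary subspace; by the defining property of the UEBs under consideration, every vector in this complementary subspace is itself a completely product state. Recall that $\ket{\psi}$ is locally unambiguously identifiable with nonzero probability if and only if there exists a completely product state that is nonorthogonal to $\ket{\psi}$ yet orthogonal to every other member of the UEB.

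Next I would pin down the location of any such candidate product state. Since the UEB states are mutually orthonormal, the orthogonal complement of the span of the remaining UEB states is precisely the orthogonal direct sum of $\ket{\psi}$ with the complementary subspace. Hence any completely product state $\ket{p}$ orthogonal to all UEB states other than $\ket{\psi}$ must take the form $\ket{p}=a_0\ket{\psi}+a_1\ket{q}$, where $\ket{q}$ lies in the complementary subspace (so $\ket{q}$ is completely product) and the nonorthogonality condition forces $a_0\neq 0$. If $a_1=0$, then $\ket{p}\propto\ket{\psi}$, which is entangled, so $\ket{p}$ cannot be product; hence one may assume $a_0,a_1\neq 0$, i.e.\ $\ket{p}$ is a genuinely nontrivial superposition of $\ket{\psi}$ with a completely product state.

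It then remains to rule out that such a nontrivial superposition is product, and here the two admissible hypotheses on $\ket{\psi}$ do the work. If $\ket{\psi}$ satisfies the assumptions of Proposition~\ref{prop:prop2}, that proposition tells us directly that $a_0\ket{\psi}+a_1\ket{q}$ is entangled, no matter which completely product state $\ket{q}$ is chosen. If instead $\ket{\psi}$ has Schmidt rank three or higher in at least one bipartition, the bipartite result of \cite{sep+ent} (as invoked in Proposition~\ref{prop_a} and in the 3-UEB discussion) guarantees the same conclusion. In either case the candidate $\ket{p}$ is entangled rather than product, contradicting the criterion of \cite{CHEFLES2004}; therefore no admissible product state exists and $\ket{\psi}$ is not locally unambiguously identifiable with nonzero probability.

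The step I expect to require the most care is the structural claim that every candidate product state is a nontrivial superposition of $\ket{\psi}$ with a completely product state drawn from the complementary subspace---in particular verifying that the complement really contains only completely product vectors for the UEB in question, and disposing of the degenerate cases $a_0=0$ and $a_1=0$. Once this is secured, the two superposition results close the argument uniformly across all dimensions and both hypotheses.
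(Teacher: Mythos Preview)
Your proposal is correct and follows essentially the same route the paper takes: it combines the argument of Theorem~\ref{thm2} (for states satisfying Proposition~\ref{prop:prop2}) with the 3-UEB discussion (for states of Schmidt rank at least three in some bipartition), both reducing to the Chefles criterion \cite{CHEFLES2004}. The paper does not give a separate proof of the corollary, treating it as immediate from these two cases; your write-up simply makes the unification explicit and handles the degenerate cases $a_0=0$, $a_1=0$ more carefully than the paper does.
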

Corollary \ref{coro} basically provides a methodology to produce incomplete sets which are locally unambiguously indistinguishable and is sufficiently general.

We now discuss a few important points. These are related to the indistinguishability property mentioned in Theorem \ref{thm2}.

\begin{itemize}
\item To prove the indistinguishability property of the W-UEB, one can follow the proof technique of the above theorem (Theorem \ref{thm2}).

\item The Theorem \ref{thm2} is also important from the following aspect. In case of two qubits, there is no incomplete set which has the aforesaid indistinguishability property \cite{SU2022}. Clearly, the minimum number of qubits required to exhibit this indistinguishability property is three. Furthermore, finding a UEB in $\mathbb{C}^2\otimes\mathbb{C}^d$, $(d>2)$ with the indistinguishability property of Theorem \ref{thm2} is very difficult. This difficulty is due to Theorem \ref{thm1}.

\item Here we have talked about how our theory is becoming useful to exhibit a type of indistinguishability property under LOCC. In particular, Proposition \ref{prop:prop2} is providing a sufficient condition for exhibiting such a property. This is depicted in the above theorem. 

\item Here one may also expect some applications of the weaker superposition scenario (Proposition \ref{prop:prop5}). However, we are yet to find such an application. In this context, we like to mention the resource theory of \cite{PCJ19}. Using the free operation of this resource theory, one may think about exhibiting the present indistinguishability property. In that case, Proposition \ref{prop:prop5} might be useful.
\end{itemize}

\section{Conclusion}\label{sec:Conclusion}
There are several useful quantitative bounds available in the literature on the entanglement content of multiparty states obtained by superposing two or more entangled pure states. Now, whenever a nonzero lower bound is available for the output state of such a superposition (interference) process, the question of whether the same is entangled or not is obviously answered. Still, a qualitative approach can be useful in cases where quantitative approaches fail \cite{sep+ent}. Here, we have considered such a qualitative approach for the superposition of pure multipartite states and found that the results can be in stark contrast with those of the bipartite case. 

In particular, we found that there exists at least one pure entangled state in any multipartite Hilbert space such that its superposition with an arbitrary completely product state always yields an entangled state. Interestingly, for bipartite pure entangled states, if at least one of the subsystems is a qubit, then it is always possible to find a suitable product state such that the superposition of the entangled state and the product state is a product state. It is also interesting to note that our analysis can be used to differentiate between the two disjoint classes of stochastic LOCC-inequivalent genuine three-qubit entangled pure states, viz., the GHZ and the W classes. Finally, we have discussed how our analysis provides insights into the unambiguous discrimination of quantum states using local quantum operations and classical communication. 

\section*{Acknowledgments}
SH was funded by a postdoctoral fellowship of Harish-Chandra Research Institute, Prayagraj (Allahabad) when this project was started. 

SH is now funded by the European Union under Horizon Europe (grant agreement no.~101080086). Views and opinions expressed are however those of the author(s) only and do not necessarily reflect those of the European Union or the European Commission. Neither the European Union nor the granting authority can be held responsible for them.

We also acknowledge partial support from the Department of Science and Technology,
Government of India, through the QuEST grant with Grant
No. DST/ICPS/QUST/Theme-3/2019/120 via I-HUB QTF of IISER Pune, India.

\bibliography{ref}

\begin{thebibliography}{74}%
\makeatletter
\providecommand \@ifxundefined [1]{%
 \@ifx{#1\undefined}
}%
\providecommand \@ifnum [1]{%
 \ifnum #1\expandafter \@firstoftwo
 \else \expandafter \@secondoftwo
 \fi
}%
\providecommand \@ifx [1]{%
 \ifx #1\expandafter \@firstoftwo
 \else \expandafter \@secondoftwo
 \fi
}%
\providecommand \natexlab [1]{#1}%
\providecommand \enquote  [1]{``#1''}%
\providecommand \bibnamefont  [1]{#1}%
\providecommand \bibfnamefont [1]{#1}%
\providecommand \citenamefont [1]{#1}%
\providecommand \href@noop [0]{\@secondoftwo}%
\providecommand \href [0]{\begingroup \@sanitize@url \@href}%
\providecommand \@href[1]{\@@startlink{#1}\@@href}%
\providecommand \@@href[1]{\endgroup#1\@@endlink}%
\providecommand \@sanitize@url [0]{\catcode `\\12\catcode `\$12\catcode `\&12\catcode `\#12\catcode `\^12\catcode `\_12\catcode `\%12\relax}%
\providecommand \@@startlink[1]{}%
\providecommand \@@endlink[0]{}%
\providecommand \url  [0]{\begingroup\@sanitize@url \@url }%
\providecommand \@url [1]{\endgroup\@href {#1}{\urlprefix }}%
\providecommand \urlprefix  [0]{URL }%
\providecommand \Eprint [0]{\href }%
\providecommand \doibase [0]{https://doi.org/}%
\providecommand \selectlanguage [0]{\@gobble}%
\providecommand \bibinfo  [0]{\@secondoftwo}%
\providecommand \bibfield  [0]{\@secondoftwo}%
\providecommand \translation [1]{[#1]}%
\providecommand \BibitemOpen [0]{}%
\providecommand \bibitemStop [0]{}%
\providecommand \bibitemNoStop [0]{.\EOS\space}%
\providecommand \EOS [0]{\spacefactor3000\relax}%
\providecommand \BibitemShut  [1]{\csname bibitem#1\endcsname}%
\let\auto@bib@innerbib\@empty
\bibitem [{\citenamefont {Aberg}(2006)}]{aberg2006}%
  \BibitemOpen
  \bibfield  {author} {\bibinfo {author} {\bibfnamefont {J.}~\bibnamefont {Aberg}},\ }\href {https://arxiv.org/abs/quant-ph/0612146} {\bibinfo {title} {Quantifying superposition}} (\bibinfo {year} {2006}),\ \Eprint {https://arxiv.org/abs/quant-ph/0612146} {arXiv:quant-ph/0612146 [quant-ph]} \BibitemShut {NoStop}%
\bibitem [{\citenamefont {Baumgratz}\ \emph {et~al.}(2014)\citenamefont {Baumgratz}, \citenamefont {Cramer},\ and\ \citenamefont {Plenio}}]{TMM2014}%
  \BibitemOpen
  \bibfield  {author} {\bibinfo {author} {\bibfnamefont {T.}~\bibnamefont {Baumgratz}}, \bibinfo {author} {\bibfnamefont {M.}~\bibnamefont {Cramer}},\ and\ \bibinfo {author} {\bibfnamefont {M.~B.}\ \bibnamefont {Plenio}},\ }\bibfield  {title} {\bibinfo {title} {Quantifying coherence},\ }\href {https://doi.org/10.1103/PhysRevLett.113.140401} {\bibfield  {journal} {\bibinfo  {journal} {Phys. Rev. Lett.}\ }\textbf {\bibinfo {volume} {113}},\ \bibinfo {pages} {140401} (\bibinfo {year} {2014})}\BibitemShut {NoStop}%
\bibitem [{\citenamefont {Streltsov}\ \emph {et~al.}(2017)\citenamefont {Streltsov}, \citenamefont {Adesso},\ and\ \citenamefont {Plenio}}]{AGM2017}%
  \BibitemOpen
  \bibfield  {author} {\bibinfo {author} {\bibfnamefont {A.}~\bibnamefont {Streltsov}}, \bibinfo {author} {\bibfnamefont {G.}~\bibnamefont {Adesso}},\ and\ \bibinfo {author} {\bibfnamefont {M.~B.}\ \bibnamefont {Plenio}},\ }\bibfield  {title} {\bibinfo {title} {Colloquium: Quantum coherence as a resource},\ }\href {https://doi.org/10.1103/RevModPhys.89.041003} {\bibfield  {journal} {\bibinfo  {journal} {Rev. Mod. Phys.}\ }\textbf {\bibinfo {volume} {89}},\ \bibinfo {pages} {041003} (\bibinfo {year} {2017})}\BibitemShut {NoStop}%
\bibitem [{\citenamefont {Horodecki}\ \emph {et~al.}(2009)\citenamefont {Horodecki}, \citenamefont {Horodecki}, \citenamefont {Horodecki},\ and\ \citenamefont {Horodecki}}]{RPMK2009}%
  \BibitemOpen
  \bibfield  {author} {\bibinfo {author} {\bibfnamefont {R.}~\bibnamefont {Horodecki}}, \bibinfo {author} {\bibfnamefont {P.}~\bibnamefont {Horodecki}}, \bibinfo {author} {\bibfnamefont {M.}~\bibnamefont {Horodecki}},\ and\ \bibinfo {author} {\bibfnamefont {K.}~\bibnamefont {Horodecki}},\ }\bibfield  {title} {\bibinfo {title} {Quantum entanglement},\ }\href {https://doi.org/10.1103/RevModPhys.81.865} {\bibfield  {journal} {\bibinfo  {journal} {Rev. Mod. Phys.}\ }\textbf {\bibinfo {volume} {81}},\ \bibinfo {pages} {865} (\bibinfo {year} {2009})}\BibitemShut {NoStop}%
\bibitem [{\citenamefont {Gühne}\ and\ \citenamefont {Tóth}(2009)}]{GUHNE2009}%
  \BibitemOpen
  \bibfield  {author} {\bibinfo {author} {\bibfnamefont {O.}~\bibnamefont {Gühne}}\ and\ \bibinfo {author} {\bibfnamefont {G.}~\bibnamefont {Tóth}},\ }\bibfield  {title} {\bibinfo {title} {Entanglement detection},\ }\href {https://doi.org/10.1016/j.physrep.2009.02.004} {\bibfield  {journal} {\bibinfo  {journal} {Physics Reports}\ }\textbf {\bibinfo {volume} {474}},\ \bibinfo {pages} {1} (\bibinfo {year} {2009})}\BibitemShut {NoStop}%
\bibitem [{\citenamefont {Das}\ \emph {et~al.}(2016)\citenamefont {Das}, \citenamefont {Chanda}, \citenamefont {Lewenstein}, \citenamefont {Sanpera}, \citenamefont {Sen~De},\ and\ \citenamefont {Sen}}]{STMAAU17}%
  \BibitemOpen
  \bibfield  {author} {\bibinfo {author} {\bibfnamefont {S.}~\bibnamefont {Das}}, \bibinfo {author} {\bibfnamefont {T.}~\bibnamefont {Chanda}}, \bibinfo {author} {\bibfnamefont {M.}~\bibnamefont {Lewenstein}}, \bibinfo {author} {\bibfnamefont {A.}~\bibnamefont {Sanpera}}, \bibinfo {author} {\bibfnamefont {A.}~\bibnamefont {Sen~De}},\ and\ \bibinfo {author} {\bibfnamefont {U.}~\bibnamefont {Sen}},\ }\bibfield  {title} {\bibinfo {title} {The separability versus entanglement problem},\ }\href {https://doi.org/10.1002/9783527805785.ch8} {\bibfield  {journal} {\bibinfo  {journal} {Quantum Information: From Foundations to Quantum Technology Applications}\ ,\ \bibinfo {pages} {127}} (\bibinfo {year} {2016})}\BibitemShut {NoStop}%
\bibitem [{\citenamefont {Bennett}\ \emph {et~al.}(1993)\citenamefont {Bennett}, \citenamefont {Brassard}, \citenamefont {Cr\'epeau}, \citenamefont {Jozsa}, \citenamefont {Peres},\ and\ \citenamefont {Wootters}}]{BBCJPW1993}%
  \BibitemOpen
  \bibfield  {author} {\bibinfo {author} {\bibfnamefont {C.~H.}\ \bibnamefont {Bennett}}, \bibinfo {author} {\bibfnamefont {G.}~\bibnamefont {Brassard}}, \bibinfo {author} {\bibfnamefont {C.}~\bibnamefont {Cr\'epeau}}, \bibinfo {author} {\bibfnamefont {R.}~\bibnamefont {Jozsa}}, \bibinfo {author} {\bibfnamefont {A.}~\bibnamefont {Peres}},\ and\ \bibinfo {author} {\bibfnamefont {W.~K.}\ \bibnamefont {Wootters}},\ }\bibfield  {title} {\bibinfo {title} {Teleporting an unknown quantum state via dual classical and einstein-podolsky-rosen channels},\ }\href {https://doi.org/10.1103/PhysRevLett.70.1895} {\bibfield  {journal} {\bibinfo  {journal} {Phys. Rev. Lett.}\ }\textbf {\bibinfo {volume} {70}},\ \bibinfo {pages} {1895} (\bibinfo {year} {1993})}\BibitemShut {NoStop}%
\bibitem [{\citenamefont {Murao}\ \emph {et~al.}(1999)\citenamefont {Murao}, \citenamefont {Jonathan}, \citenamefont {Plenio},\ and\ \citenamefont {Vedral}}]{MDMV1999}%
  \BibitemOpen
  \bibfield  {author} {\bibinfo {author} {\bibfnamefont {M.}~\bibnamefont {Murao}}, \bibinfo {author} {\bibfnamefont {D.}~\bibnamefont {Jonathan}}, \bibinfo {author} {\bibfnamefont {M.~B.}\ \bibnamefont {Plenio}},\ and\ \bibinfo {author} {\bibfnamefont {V.}~\bibnamefont {Vedral}},\ }\bibfield  {title} {\bibinfo {title} {Quantum telecloning and multiparticle entanglement},\ }\href {https://doi.org/10.1103/PhysRevA.59.156} {\bibfield  {journal} {\bibinfo  {journal} {Phys. Rev. A}\ }\textbf {\bibinfo {volume} {59}},\ \bibinfo {pages} {156} (\bibinfo {year} {1999})}\BibitemShut {NoStop}%
\bibitem [{\citenamefont {Barrett}\ \emph {et~al.}(2004)\citenamefont {Barrett}, \citenamefont {Chiaverini}, \citenamefont {Schaetz}, \citenamefont {Britton}, \citenamefont {Itano}, \citenamefont {Jost}, \citenamefont {Knill}, \citenamefont {Langer}, \citenamefont {Leibfried}, \citenamefont {R.},\ and\ \citenamefont {Wineland}}]{MJT2004}%
  \BibitemOpen
  \bibfield  {author} {\bibinfo {author} {\bibfnamefont {M.~D.}\ \bibnamefont {Barrett}}, \bibinfo {author} {\bibfnamefont {J.}~\bibnamefont {Chiaverini}}, \bibinfo {author} {\bibfnamefont {T.}~\bibnamefont {Schaetz}}, \bibinfo {author} {\bibfnamefont {J.}~\bibnamefont {Britton}}, \bibinfo {author} {\bibfnamefont {W.~M.}\ \bibnamefont {Itano}}, \bibinfo {author} {\bibfnamefont {J.~D.}\ \bibnamefont {Jost}}, \bibinfo {author} {\bibfnamefont {E.}~\bibnamefont {Knill}}, \bibinfo {author} {\bibfnamefont {C.}~\bibnamefont {Langer}}, \bibinfo {author} {\bibfnamefont {D.}~\bibnamefont {Leibfried}}, \bibinfo {author} {\bibfnamefont {O.}~\bibnamefont {R.}},\ and\ \bibinfo {author} {\bibfnamefont {D.~J.}\ \bibnamefont {Wineland}},\ }\bibfield  {title} {\bibinfo {title} {Deterministic quantum teleportation of atomic qubits},\ }\href {https://doi.org/10.1038/nature02608} {\bibfield  {journal} {\bibinfo  {journal} {Nature}\ }\textbf {\bibinfo {volume} {429}},\ \bibinfo {pages} {737} (\bibinfo {year} {2004})}\BibitemShut
  {NoStop}%
\bibitem [{\citenamefont {Bennett}\ and\ \citenamefont {Wiesner}(1992)}]{CS1994}%
  \BibitemOpen
  \bibfield  {author} {\bibinfo {author} {\bibfnamefont {C.~H.}\ \bibnamefont {Bennett}}\ and\ \bibinfo {author} {\bibfnamefont {S.~J.}\ \bibnamefont {Wiesner}},\ }\bibfield  {title} {\bibinfo {title} {Communication via one- and two-particle operators on einstein-podolsky-rosen states},\ }\href {https://doi.org/10.1103/PhysRevLett.69.2881} {\bibfield  {journal} {\bibinfo  {journal} {Phys. Rev. Lett.}\ }\textbf {\bibinfo {volume} {69}},\ \bibinfo {pages} {2881} (\bibinfo {year} {1992})}\BibitemShut {NoStop}%
\bibitem [{\citenamefont {Barenco}\ and\ \citenamefont {Ekert}(1994)}]{AA1994}%
  \BibitemOpen
  \bibfield  {author} {\bibinfo {author} {\bibfnamefont {A.}~\bibnamefont {Barenco}}\ and\ \bibinfo {author} {\bibfnamefont {A.~K.}\ \bibnamefont {Ekert}},\ }\bibfield  {title} {\bibinfo {title} {Dense coding based on quantum entanglement},\ }\href {https://doi.org/10.1080/09500349514551091} {\bibfield  {journal} {\bibinfo  {journal} {Journal of Modern Optics}\ }\textbf {\bibinfo {volume} {42}},\ \bibinfo {pages} {1253} (\bibinfo {year} {1994})}\BibitemShut {NoStop}%
\bibitem [{\citenamefont {Mattle}\ \emph {et~al.}(1996)\citenamefont {Mattle}, \citenamefont {Weinfurter}, \citenamefont {Kwiat},\ and\ \citenamefont {Zeilinger}}]{KHPN1996}%
  \BibitemOpen
  \bibfield  {author} {\bibinfo {author} {\bibfnamefont {K.}~\bibnamefont {Mattle}}, \bibinfo {author} {\bibfnamefont {H.}~\bibnamefont {Weinfurter}}, \bibinfo {author} {\bibfnamefont {P.~G.}\ \bibnamefont {Kwiat}},\ and\ \bibinfo {author} {\bibfnamefont {A.}~\bibnamefont {Zeilinger}},\ }\bibfield  {title} {\bibinfo {title} {Dense coding in experimental quantum communication},\ }\href {https://doi.org/10.1103/PhysRevLett.76.4656} {\bibfield  {journal} {\bibinfo  {journal} {Phys. Rev. Lett.}\ }\textbf {\bibinfo {volume} {76}},\ \bibinfo {pages} {4656} (\bibinfo {year} {1996})}\BibitemShut {NoStop}%
\bibitem [{\citenamefont {Bru\ss{}}\ \emph {et~al.}(2004)\citenamefont {Bru\ss{}}, \citenamefont {D'Ariano}, \citenamefont {Lewenstein}, \citenamefont {Macchiavello}, \citenamefont {Sen(De)},\ and\ \citenamefont {Sen}}]{DGMCAU2004}%
  \BibitemOpen
  \bibfield  {author} {\bibinfo {author} {\bibfnamefont {D.}~\bibnamefont {Bru\ss{}}}, \bibinfo {author} {\bibfnamefont {G.~M.}\ \bibnamefont {D'Ariano}}, \bibinfo {author} {\bibfnamefont {M.}~\bibnamefont {Lewenstein}}, \bibinfo {author} {\bibfnamefont {C.}~\bibnamefont {Macchiavello}}, \bibinfo {author} {\bibfnamefont {A.}~\bibnamefont {Sen(De)}},\ and\ \bibinfo {author} {\bibfnamefont {U.}~\bibnamefont {Sen}},\ }\bibfield  {title} {\bibinfo {title} {Distributed quantum dense coding},\ }\href {https://doi.org/10.1103/PhysRevLett.93.210501} {\bibfield  {journal} {\bibinfo  {journal} {Phys. Rev. Lett.}\ }\textbf {\bibinfo {volume} {93}},\ \bibinfo {pages} {210501} (\bibinfo {year} {2004})}\BibitemShut {NoStop}%
\bibitem [{\citenamefont {Prabhu}\ \emph {et~al.}(2013{\natexlab{a}})\citenamefont {Prabhu}, \citenamefont {Pati}, \citenamefont {Sen(De)},\ and\ \citenamefont {Sen}}]{RAAU2013}%
  \BibitemOpen
  \bibfield  {author} {\bibinfo {author} {\bibfnamefont {R.}~\bibnamefont {Prabhu}}, \bibinfo {author} {\bibfnamefont {A.~K.}\ \bibnamefont {Pati}}, \bibinfo {author} {\bibfnamefont {A.}~\bibnamefont {Sen(De)}},\ and\ \bibinfo {author} {\bibfnamefont {U.}~\bibnamefont {Sen}},\ }\bibfield  {title} {\bibinfo {title} {Exclusion principle for quantum dense coding},\ }\href {https://doi.org/10.1103/PhysRevA.87.052319} {\bibfield  {journal} {\bibinfo  {journal} {Phys. Rev. A}\ }\textbf {\bibinfo {volume} {87}},\ \bibinfo {pages} {052319} (\bibinfo {year} {2013}{\natexlab{a}})}\BibitemShut {NoStop}%
\bibitem [{\citenamefont {Prabhu}\ \emph {et~al.}(2013{\natexlab{b}})\citenamefont {Prabhu}, \citenamefont {Sen(De)},\ and\ \citenamefont {Sen}}]{RAU2013}%
  \BibitemOpen
  \bibfield  {author} {\bibinfo {author} {\bibfnamefont {R.}~\bibnamefont {Prabhu}}, \bibinfo {author} {\bibfnamefont {A.}~\bibnamefont {Sen(De)}},\ and\ \bibinfo {author} {\bibfnamefont {U.}~\bibnamefont {Sen}},\ }\bibfield  {title} {\bibinfo {title} {Genuine multiparty quantum entanglement suppresses multiport classical information transmission},\ }\href {https://doi.org/10.1103/PhysRevA.88.042329} {\bibfield  {journal} {\bibinfo  {journal} {Phys. Rev. A}\ }\textbf {\bibinfo {volume} {88}},\ \bibinfo {pages} {042329} (\bibinfo {year} {2013}{\natexlab{b}})}\BibitemShut {NoStop}%
\bibitem [{\citenamefont {Ekert}(1991)}]{A1991}%
  \BibitemOpen
  \bibfield  {author} {\bibinfo {author} {\bibfnamefont {A.~K.}\ \bibnamefont {Ekert}},\ }\bibfield  {title} {\bibinfo {title} {Quantum cryptography based on bell's theorem},\ }\href {https://doi.org/10.1103/PhysRevLett.67.661} {\bibfield  {journal} {\bibinfo  {journal} {Phys. Rev. Lett.}\ }\textbf {\bibinfo {volume} {67}},\ \bibinfo {pages} {661} (\bibinfo {year} {1991})}\BibitemShut {NoStop}%
\bibitem [{\citenamefont {Bennett}\ and\ \citenamefont {Brassard}(2014)}]{CG2014}%
  \BibitemOpen
  \bibfield  {author} {\bibinfo {author} {\bibfnamefont {C.~H.}\ \bibnamefont {Bennett}}\ and\ \bibinfo {author} {\bibfnamefont {G.}~\bibnamefont {Brassard}},\ }\bibfield  {title} {\bibinfo {title} {Quantum cryptography: Public key distribution and coin tossing},\ }\href {https://doi.org/10.1016/j.tcs.2014.05.025} {\bibfield  {journal} {\bibinfo  {journal} {Theor. Comput. Sci.}\ }\textbf {\bibinfo {volume} {560}},\ \bibinfo {pages} {7} (\bibinfo {year} {2014})}\BibitemShut {NoStop}%
\bibitem [{\citenamefont {Shalm}(2022)}]{S2022}%
  \BibitemOpen
  \bibfield  {author} {\bibinfo {author} {\bibfnamefont {K.}~\bibnamefont {Shalm}},\ }\bibfield  {title} {\bibinfo {title} {Quantum entanglement provides a key to improved security},\ }\href {https://doi.org/10.1038/d41586-022-01987-3} {\bibfield  {journal} {\bibinfo  {journal} {Nature}\ }\textbf {\bibinfo {volume} {607}},\ \bibinfo {pages} {662} (\bibinfo {year} {2022})}\BibitemShut {NoStop}%
\bibitem [{\citenamefont {Streltsov}\ \emph {et~al.}(2015)\citenamefont {Streltsov}, \citenamefont {Singh}, \citenamefont {Dhar}, \citenamefont {Bera},\ and\ \citenamefont {Adesso}}]{AUHMG2015}%
  \BibitemOpen
  \bibfield  {author} {\bibinfo {author} {\bibfnamefont {A.}~\bibnamefont {Streltsov}}, \bibinfo {author} {\bibfnamefont {U.}~\bibnamefont {Singh}}, \bibinfo {author} {\bibfnamefont {H.~S.}\ \bibnamefont {Dhar}}, \bibinfo {author} {\bibfnamefont {M.~N.}\ \bibnamefont {Bera}},\ and\ \bibinfo {author} {\bibfnamefont {G.}~\bibnamefont {Adesso}},\ }\bibfield  {title} {\bibinfo {title} {Measuring quantum coherence with entanglement},\ }\href {https://doi.org/10.1103/PhysRevLett.115.020403} {\bibfield  {journal} {\bibinfo  {journal} {Phys. Rev. Lett.}\ }\textbf {\bibinfo {volume} {115}},\ \bibinfo {pages} {020403} (\bibinfo {year} {2015})}\BibitemShut {NoStop}%
\bibitem [{\citenamefont {Chitambar}\ and\ \citenamefont {Hsieh}(2016)}]{EC2016}%
  \BibitemOpen
  \bibfield  {author} {\bibinfo {author} {\bibfnamefont {E.}~\bibnamefont {Chitambar}}\ and\ \bibinfo {author} {\bibfnamefont {M.-H.}\ \bibnamefont {Hsieh}},\ }\bibfield  {title} {\bibinfo {title} {Relating the resource theories of entanglement and quantum coherence},\ }\href {https://doi.org/10.1103/PhysRevLett.117.020402} {\bibfield  {journal} {\bibinfo  {journal} {Phys. Rev. Lett.}\ }\textbf {\bibinfo {volume} {117}},\ \bibinfo {pages} {020402} (\bibinfo {year} {2016})}\BibitemShut {NoStop}%
\bibitem [{\citenamefont {Zhu}\ \emph {et~al.}(2017)\citenamefont {Zhu}, \citenamefont {Ma}, \citenamefont {Cao}, \citenamefont {Fei},\ and\ \citenamefont {Vedral}}]{HZZSV2017}%
  \BibitemOpen
  \bibfield  {author} {\bibinfo {author} {\bibfnamefont {H.}~\bibnamefont {Zhu}}, \bibinfo {author} {\bibfnamefont {Z.}~\bibnamefont {Ma}}, \bibinfo {author} {\bibfnamefont {Z.}~\bibnamefont {Cao}}, \bibinfo {author} {\bibfnamefont {S.-M.}\ \bibnamefont {Fei}},\ and\ \bibinfo {author} {\bibfnamefont {V.}~\bibnamefont {Vedral}},\ }\bibfield  {title} {\bibinfo {title} {Operational one-to-one mapping between coherence and entanglement measures},\ }\href {https://doi.org/10.1103/PhysRevA.96.032316} {\bibfield  {journal} {\bibinfo  {journal} {Phys. Rev. A}\ }\textbf {\bibinfo {volume} {96}},\ \bibinfo {pages} {032316} (\bibinfo {year} {2017})}\BibitemShut {NoStop}%
\bibitem [{\citenamefont {Mekala}\ and\ \citenamefont {Sen}(2021)}]{AU2021}%
  \BibitemOpen
  \bibfield  {author} {\bibinfo {author} {\bibfnamefont {A.}~\bibnamefont {Mekala}}\ and\ \bibinfo {author} {\bibfnamefont {U.}~\bibnamefont {Sen}},\ }\bibfield  {title} {\bibinfo {title} {All entangled states are quantum coherent with locally distinguishable bases},\ }\href {https://doi.org/10.1103/PhysRevA.104.L050402} {\bibfield  {journal} {\bibinfo  {journal} {Phys. Rev. A}\ }\textbf {\bibinfo {volume} {104}},\ \bibinfo {pages} {L050402} (\bibinfo {year} {2021})}\BibitemShut {NoStop}%
\bibitem [{\citenamefont {Ghoshal}\ \emph {et~al.}(2023)\citenamefont {Ghoshal}, \citenamefont {Choudhary},\ and\ \citenamefont {Sen}}]{ASU2023}%
  \BibitemOpen
  \bibfield  {author} {\bibinfo {author} {\bibfnamefont {A.}~\bibnamefont {Ghoshal}}, \bibinfo {author} {\bibfnamefont {S.}~\bibnamefont {Choudhary}},\ and\ \bibinfo {author} {\bibfnamefont {U.}~\bibnamefont {Sen}},\ }\bibfield  {title} {\bibinfo {title} {All multipartite entanglements are quantum coherences in locally distinguishable bases},\ }\bibfield  {journal} {\bibinfo  {journal} {arxiv:}\ }\href {https://doi.org/10.48550/arXiv.2304.05249} {10.48550/arXiv.2304.05249} (\bibinfo {year} {2023})\BibitemShut {NoStop}%
\bibitem [{\citenamefont {Hudson}\ \emph {et~al.}(2007)\citenamefont {Hudson}, \citenamefont {Stevenson}, \citenamefont {Bennett}, \citenamefont {Young}, \citenamefont {Nicoll}, \citenamefont {Atkinson}, \citenamefont {Cooper}, \citenamefont {Ritchie},\ and\ \citenamefont {Shields}}]{ARARCPKDA2007}%
  \BibitemOpen
  \bibfield  {author} {\bibinfo {author} {\bibfnamefont {A.~J.}\ \bibnamefont {Hudson}}, \bibinfo {author} {\bibfnamefont {R.~M.}\ \bibnamefont {Stevenson}}, \bibinfo {author} {\bibfnamefont {A.~J.}\ \bibnamefont {Bennett}}, \bibinfo {author} {\bibfnamefont {R.~J.}\ \bibnamefont {Young}}, \bibinfo {author} {\bibfnamefont {C.~A.}\ \bibnamefont {Nicoll}}, \bibinfo {author} {\bibfnamefont {P.}~\bibnamefont {Atkinson}}, \bibinfo {author} {\bibfnamefont {K.}~\bibnamefont {Cooper}}, \bibinfo {author} {\bibfnamefont {D.~A.}\ \bibnamefont {Ritchie}},\ and\ \bibinfo {author} {\bibfnamefont {A.~J.}\ \bibnamefont {Shields}},\ }\bibfield  {title} {\bibinfo {title} {Coherence of an entangled exciton-photon state},\ }\href {https://doi.org/10.1103/PhysRevLett.99.266802} {\bibfield  {journal} {\bibinfo  {journal} {Phys. Rev. Lett.}\ }\textbf {\bibinfo {volume} {99}},\ \bibinfo {pages} {266802} (\bibinfo {year} {2007})}\BibitemShut {NoStop}%
\bibitem [{\citenamefont {Tang}\ \emph {et~al.}(2010)\citenamefont {Tang}, \citenamefont {Li},\ and\ \citenamefont {Ficek}}]{ZGZ2010}%
  \BibitemOpen
  \bibfield  {author} {\bibinfo {author} {\bibfnamefont {Z.-h.}\ \bibnamefont {Tang}}, \bibinfo {author} {\bibfnamefont {G.-x.}\ \bibnamefont {Li}},\ and\ \bibinfo {author} {\bibfnamefont {Z.}~\bibnamefont {Ficek}},\ }\bibfield  {title} {\bibinfo {title} {Entanglement created by spontaneously generated coherence},\ }\href {https://doi.org/10.1103/PhysRevA.82.063837} {\bibfield  {journal} {\bibinfo  {journal} {Phys. Rev. A}\ }\textbf {\bibinfo {volume} {82}},\ \bibinfo {pages} {063837} (\bibinfo {year} {2010})}\BibitemShut {NoStop}%
\bibitem [{\citenamefont {Boyanovsky}(2013)}]{D2013}%
  \BibitemOpen
  \bibfield  {author} {\bibinfo {author} {\bibfnamefont {D.}~\bibnamefont {Boyanovsky}},\ }\bibfield  {title} {\bibinfo {title} {Coherence and decoherence in photon--spin-qubit entanglement},\ }\href {https://doi.org/10.1103/PhysRevA.87.033815} {\bibfield  {journal} {\bibinfo  {journal} {Phys. Rev. A}\ }\textbf {\bibinfo {volume} {87}},\ \bibinfo {pages} {033815} (\bibinfo {year} {2013})}\BibitemShut {NoStop}%
\bibitem [{\citenamefont {Killoran}\ \emph {et~al.}(2016)\citenamefont {Killoran}, \citenamefont {Steinhoff},\ and\ \citenamefont {Plenio}}]{NFM2016}%
  \BibitemOpen
  \bibfield  {author} {\bibinfo {author} {\bibfnamefont {N.}~\bibnamefont {Killoran}}, \bibinfo {author} {\bibfnamefont {F.~E.~S.}\ \bibnamefont {Steinhoff}},\ and\ \bibinfo {author} {\bibfnamefont {M.~B.}\ \bibnamefont {Plenio}},\ }\bibfield  {title} {\bibinfo {title} {Converting nonclassicality into entanglement},\ }\href {https://doi.org/10.1103/PhysRevLett.116.080402} {\bibfield  {journal} {\bibinfo  {journal} {Phys. Rev. Lett.}\ }\textbf {\bibinfo {volume} {116}},\ \bibinfo {pages} {080402} (\bibinfo {year} {2016})}\BibitemShut {NoStop}%
\bibitem [{\citenamefont {Streltsov}\ \emph {et~al.}(2016)\citenamefont {Streltsov}, \citenamefont {Chitambar}, \citenamefont {Rana}, \citenamefont {Bera}, \citenamefont {Winter},\ and\ \citenamefont {Lewenstein}}]{AESMAM2016}%
  \BibitemOpen
  \bibfield  {author} {\bibinfo {author} {\bibfnamefont {A.}~\bibnamefont {Streltsov}}, \bibinfo {author} {\bibfnamefont {E.}~\bibnamefont {Chitambar}}, \bibinfo {author} {\bibfnamefont {S.}~\bibnamefont {Rana}}, \bibinfo {author} {\bibfnamefont {M.~N.}\ \bibnamefont {Bera}}, \bibinfo {author} {\bibfnamefont {A.}~\bibnamefont {Winter}},\ and\ \bibinfo {author} {\bibfnamefont {M.}~\bibnamefont {Lewenstein}},\ }\bibfield  {title} {\bibinfo {title} {Entanglement and coherence in quantum state merging},\ }\href {https://doi.org/10.1103/PhysRevLett.116.240405} {\bibfield  {journal} {\bibinfo  {journal} {Phys. Rev. Lett.}\ }\textbf {\bibinfo {volume} {116}},\ \bibinfo {pages} {240405} (\bibinfo {year} {2016})}\BibitemShut {NoStop}%
\bibitem [{\citenamefont {Qi}\ \emph {et~al.}(2017)\citenamefont {Qi}, \citenamefont {Gao},\ and\ \citenamefont {Yan}}]{XTF2017}%
  \BibitemOpen
  \bibfield  {author} {\bibinfo {author} {\bibfnamefont {X.}~\bibnamefont {Qi}}, \bibinfo {author} {\bibfnamefont {T.}~\bibnamefont {Gao}},\ and\ \bibinfo {author} {\bibfnamefont {F.}~\bibnamefont {Yan}},\ }\bibfield  {title} {\bibinfo {title} {Measuring coherence with entanglement concurrence},\ }\href {https://doi.org/10.1088/1751-8121/aa7638} {\bibfield  {journal} {\bibinfo  {journal} {J. Phys. A-Math.}\ }\textbf {\bibinfo {volume} {50}},\ \bibinfo {pages} {285301} (\bibinfo {year} {2017})}\BibitemShut {NoStop}%
\bibitem [{\citenamefont {Tan}\ and\ \citenamefont {Jeong}(2018)}]{KH2018}%
  \BibitemOpen
  \bibfield  {author} {\bibinfo {author} {\bibfnamefont {K.~C.}\ \bibnamefont {Tan}}\ and\ \bibinfo {author} {\bibfnamefont {H.}~\bibnamefont {Jeong}},\ }\bibfield  {title} {\bibinfo {title} {Entanglement as the symmetric portion of correlated coherence},\ }\href {https://doi.org/10.1103/PhysRevLett.121.220401} {\bibfield  {journal} {\bibinfo  {journal} {Phys. Rev. Lett.}\ }\textbf {\bibinfo {volume} {121}},\ \bibinfo {pages} {220401} (\bibinfo {year} {2018})}\BibitemShut {NoStop}%
\bibitem [{\citenamefont {Qiao}\ \emph {et~al.}(2018)\citenamefont {Qiao}, \citenamefont {Streltsov}, \citenamefont {Gao}, \citenamefont {Rana}, \citenamefont {Ren}, \citenamefont {Jiao}, \citenamefont {Hu}, \citenamefont {Xu}, \citenamefont {Wang}, \citenamefont {Tang}, \citenamefont {Yang}, \citenamefont {Ma}, \citenamefont {Lewenstein},\ and\ \citenamefont {Jin}}]{LAJSRZCXC2018}%
  \BibitemOpen
  \bibfield  {author} {\bibinfo {author} {\bibfnamefont {L.-F.}\ \bibnamefont {Qiao}}, \bibinfo {author} {\bibfnamefont {A.}~\bibnamefont {Streltsov}}, \bibinfo {author} {\bibfnamefont {J.}~\bibnamefont {Gao}}, \bibinfo {author} {\bibfnamefont {S.}~\bibnamefont {Rana}}, \bibinfo {author} {\bibfnamefont {R.-J.}\ \bibnamefont {Ren}}, \bibinfo {author} {\bibfnamefont {Z.-Q.}\ \bibnamefont {Jiao}}, \bibinfo {author} {\bibfnamefont {C.-Q.}\ \bibnamefont {Hu}}, \bibinfo {author} {\bibfnamefont {X.-Y.}\ \bibnamefont {Xu}}, \bibinfo {author} {\bibfnamefont {C.-Y.}\ \bibnamefont {Wang}}, \bibinfo {author} {\bibfnamefont {H.}~\bibnamefont {Tang}}, \bibinfo {author} {\bibfnamefont {A.-L.}\ \bibnamefont {Yang}}, \bibinfo {author} {\bibfnamefont {Z.-H.}\ \bibnamefont {Ma}}, \bibinfo {author} {\bibfnamefont {M.}~\bibnamefont {Lewenstein}},\ and\ \bibinfo {author} {\bibfnamefont {X.-M.}\ \bibnamefont {Jin}},\ }\bibfield  {title} {\bibinfo {title} {Entanglement activation from quantum coherence and superposition},\ }\href
  {https://doi.org/10.1103/PhysRevA.98.052351} {\bibfield  {journal} {\bibinfo  {journal} {Phys. Rev. A}\ }\textbf {\bibinfo {volume} {98}},\ \bibinfo {pages} {052351} (\bibinfo {year} {2018})}\BibitemShut {NoStop}%
\bibitem [{\citenamefont {Zhu}\ \emph {et~al.}(2018)\citenamefont {Zhu}, \citenamefont {Hayashi},\ and\ \citenamefont {Chen}}]{HML2018}%
  \BibitemOpen
  \bibfield  {author} {\bibinfo {author} {\bibfnamefont {H.}~\bibnamefont {Zhu}}, \bibinfo {author} {\bibfnamefont {M.}~\bibnamefont {Hayashi}},\ and\ \bibinfo {author} {\bibfnamefont {L.}~\bibnamefont {Chen}},\ }\bibfield  {title} {\bibinfo {title} {Axiomatic and operational connections between the ${l}_{1}$-norm of coherence and negativity},\ }\href {https://doi.org/10.1103/PhysRevA.97.022342} {\bibfield  {journal} {\bibinfo  {journal} {Phys. Rev. A}\ }\textbf {\bibinfo {volume} {97}},\ \bibinfo {pages} {022342} (\bibinfo {year} {2018})}\BibitemShut {NoStop}%
\bibitem [{\citenamefont {Xi}\ \emph {et~al.}(2019)\citenamefont {Xi}, \citenamefont {Zhang}, \citenamefont {Zheng}, \citenamefont {Li-Jost},\ and\ \citenamefont {Fei}}]{YTZXS2019}%
  \BibitemOpen
  \bibfield  {author} {\bibinfo {author} {\bibfnamefont {Y.}~\bibnamefont {Xi}}, \bibinfo {author} {\bibfnamefont {T.}~\bibnamefont {Zhang}}, \bibinfo {author} {\bibfnamefont {Z.-J.}\ \bibnamefont {Zheng}}, \bibinfo {author} {\bibfnamefont {X.}~\bibnamefont {Li-Jost}},\ and\ \bibinfo {author} {\bibfnamefont {S.-M.}\ \bibnamefont {Fei}},\ }\bibfield  {title} {\bibinfo {title} {Converting quantum coherence to genuine multipartite entanglement and nonlocality},\ }\href {https://doi.org/10.1103/PhysRevA.100.022310} {\bibfield  {journal} {\bibinfo  {journal} {Phys. Rev. A}\ }\textbf {\bibinfo {volume} {100}},\ \bibinfo {pages} {022310} (\bibinfo {year} {2019})}\BibitemShut {NoStop}%
\bibitem [{\citenamefont {Theurer}\ \emph {et~al.}(2020)\citenamefont {Theurer}, \citenamefont {Satyajit},\ and\ \citenamefont {Plenio}}]{TSM2020}%
  \BibitemOpen
  \bibfield  {author} {\bibinfo {author} {\bibfnamefont {T.}~\bibnamefont {Theurer}}, \bibinfo {author} {\bibfnamefont {S.}~\bibnamefont {Satyajit}},\ and\ \bibinfo {author} {\bibfnamefont {M.~B.}\ \bibnamefont {Plenio}},\ }\bibfield  {title} {\bibinfo {title} {Quantifying dynamical coherence with dynamical entanglement},\ }\href {https://doi.org/10.1103/PhysRevLett.125.130401} {\bibfield  {journal} {\bibinfo  {journal} {Phys. Rev. Lett.}\ }\textbf {\bibinfo {volume} {125}},\ \bibinfo {pages} {130401} (\bibinfo {year} {2020})}\BibitemShut {NoStop}%
\bibitem [{\citenamefont {Ren}\ \emph {et~al.}(2021)\citenamefont {Ren}, \citenamefont {Gao}, \citenamefont {Ren}, \citenamefont {Wang},\ and\ \citenamefont {Bai}}]{LMJY2021}%
  \BibitemOpen
  \bibfield  {author} {\bibinfo {author} {\bibfnamefont {L.-H.}\ \bibnamefont {Ren}}, \bibinfo {author} {\bibfnamefont {M.}~\bibnamefont {Gao}}, \bibinfo {author} {\bibfnamefont {J.}~\bibnamefont {Ren}}, \bibinfo {author} {\bibfnamefont {Z.}~\bibnamefont {Wang}},\ and\ \bibinfo {author} {\bibfnamefont {Y.-K.}\ \bibnamefont {Bai}},\ }\bibfield  {title} {\bibinfo {title} {Resource conversion between operational coherence and multipartite entanglement in many-body systems},\ }\href {https://doi.org/10.1088/1367-2630/abd9e6} {\bibfield  {journal} {\bibinfo  {journal} {New J. Phys.}\ }\textbf {\bibinfo {volume} {23}} (\bibinfo {year} {2021})}\BibitemShut {NoStop}%
\bibitem [{\citenamefont {Vidal}\ and\ \citenamefont {Tarrach}(1999)}]{Vidal1999}%
  \BibitemOpen
  \bibfield  {author} {\bibinfo {author} {\bibfnamefont {G.}~\bibnamefont {Vidal}}\ and\ \bibinfo {author} {\bibfnamefont {R.}~\bibnamefont {Tarrach}},\ }\bibfield  {title} {\bibinfo {title} {Robustness of entanglement},\ }\href {https://doi.org/10.1103/PhysRevA.59.141} {\bibfield  {journal} {\bibinfo  {journal} {Phys. Rev. A}\ }\textbf {\bibinfo {volume} {59}},\ \bibinfo {pages} {141} (\bibinfo {year} {1999})}\BibitemShut {NoStop}%
\bibitem [{\citenamefont {Horodecki}\ \emph {et~al.}(2003)\citenamefont {Horodecki}, \citenamefont {Smolin}, \citenamefont {Terhal},\ and\ \citenamefont {Thapliyal}}]{HORODECKI2003589}%
  \BibitemOpen
  \bibfield  {author} {\bibinfo {author} {\bibfnamefont {P.}~\bibnamefont {Horodecki}}, \bibinfo {author} {\bibfnamefont {J.~A.}\ \bibnamefont {Smolin}}, \bibinfo {author} {\bibfnamefont {B.~M.}\ \bibnamefont {Terhal}},\ and\ \bibinfo {author} {\bibfnamefont {A.~V.}\ \bibnamefont {Thapliyal}},\ }\bibfield  {title} {\bibinfo {title} {Rank two bipartite bound entangled states do not exist},\ }\href {https://doi.org/10.1016/S0304-3975(01)00376-0} {\bibfield  {journal} {\bibinfo  {journal} {Theor. Comput. Sci.}\ }\textbf {\bibinfo {volume} {292}},\ \bibinfo {pages} {589} (\bibinfo {year} {2003})}\BibitemShut {NoStop}%
\bibitem [{\citenamefont {Halder}\ and\ \citenamefont {Sen}(2023)}]{sep+ent}%
  \BibitemOpen
  \bibfield  {author} {\bibinfo {author} {\bibfnamefont {S.}~\bibnamefont {Halder}}\ and\ \bibinfo {author} {\bibfnamefont {U.}~\bibnamefont {Sen}},\ }\bibfield  {title} {\bibinfo {title} {Separability and entanglement in superpositions of quantum states},\ }\href {https://doi.org/10.1103/PhysRevA.107.022413} {\bibfield  {journal} {\bibinfo  {journal} {Phys. Rev. A}\ }\textbf {\bibinfo {volume} {107}},\ \bibinfo {pages} {022413} (\bibinfo {year} {2023})}\BibitemShut {NoStop}%
\bibitem [{\citenamefont {Linden}\ \emph {et~al.}(2006)\citenamefont {Linden}, \citenamefont {Popescu},\ and\ \citenamefont {Smolin}}]{LPS2006}%
  \BibitemOpen
  \bibfield  {author} {\bibinfo {author} {\bibfnamefont {N.}~\bibnamefont {Linden}}, \bibinfo {author} {\bibfnamefont {S.}~\bibnamefont {Popescu}},\ and\ \bibinfo {author} {\bibfnamefont {J.~A.}\ \bibnamefont {Smolin}},\ }\bibfield  {title} {\bibinfo {title} {Entanglement of superpositions},\ }\href {https://doi.org/10.1103/PhysRevLett.97.100502} {\bibfield  {journal} {\bibinfo  {journal} {Phys. Rev. Lett.}\ }\textbf {\bibinfo {volume} {97}},\ \bibinfo {pages} {100502} (\bibinfo {year} {2006})}\BibitemShut {NoStop}%
\bibitem [{\citenamefont {Ou}\ and\ \citenamefont {Fan}(2007)}]{UH2007}%
  \BibitemOpen
  \bibfield  {author} {\bibinfo {author} {\bibfnamefont {Y.~C.}\ \bibnamefont {Ou}}\ and\ \bibinfo {author} {\bibfnamefont {H.}~\bibnamefont {Fan}},\ }\bibfield  {title} {\bibinfo {title} {Bounds on negativity of superpositions},\ }\href {https://doi.org/10.1103/PhysRevA.76.022320} {\bibfield  {journal} {\bibinfo  {journal} {Phys. Rev. A}\ }\textbf {\bibinfo {volume} {76}},\ \bibinfo {pages} {022320} (\bibinfo {year} {2007})}\BibitemShut {NoStop}%
\bibitem [{\citenamefont {Yu}\ \emph {et~al.}(2007)\citenamefont {Yu}, \citenamefont {Yi},\ and\ \citenamefont {Song}}]{CXH2007}%
  \BibitemOpen
  \bibfield  {author} {\bibinfo {author} {\bibfnamefont {C.~S.}\ \bibnamefont {Yu}}, \bibinfo {author} {\bibfnamefont {X.~X.}\ \bibnamefont {Yi}},\ and\ \bibinfo {author} {\bibfnamefont {H.~S.}\ \bibnamefont {Song}},\ }\bibfield  {title} {\bibinfo {title} {Concurrence of superpositions},\ }\href {https://doi.org/10.1103/PhysRevA.75.022332} {\bibfield  {journal} {\bibinfo  {journal} {Phys. Rev. A}\ }\textbf {\bibinfo {volume} {75}},\ \bibinfo {pages} {022332} (\bibinfo {year} {2007})}\BibitemShut {NoStop}%
\bibitem [{\citenamefont {Niset}\ and\ \citenamefont {Cerf}(2007)}]{JN2007}%
  \BibitemOpen
  \bibfield  {author} {\bibinfo {author} {\bibfnamefont {J.}~\bibnamefont {Niset}}\ and\ \bibinfo {author} {\bibfnamefont {N.~J.}\ \bibnamefont {Cerf}},\ }\bibfield  {title} {\bibinfo {title} {Tight bounds on the concurrence of quantum superpositions},\ }\href {https://doi.org/10.1103/PhysRevA.76.042328} {\bibfield  {journal} {\bibinfo  {journal} {Phys. Rev. A}\ }\textbf {\bibinfo {volume} {76}},\ \bibinfo {pages} {042328} (\bibinfo {year} {2007})}\BibitemShut {NoStop}%
\bibitem [{\citenamefont {Gour}(2007)}]{G2007}%
  \BibitemOpen
  \bibfield  {author} {\bibinfo {author} {\bibfnamefont {G.}~\bibnamefont {Gour}},\ }\bibfield  {title} {\bibinfo {title} {Reexamination of entanglement of superpositions},\ }\href {https://doi.org/10.1103/PhysRevA.76.052320} {\bibfield  {journal} {\bibinfo  {journal} {Phys. Rev. A}\ }\textbf {\bibinfo {volume} {76}},\ \bibinfo {pages} {052320} (\bibinfo {year} {2007})}\BibitemShut {NoStop}%
\bibitem [{\citenamefont {Cavalcanti}\ \emph {et~al.}(2007)\citenamefont {Cavalcanti}, \citenamefont {Terra~Cunha},\ and\ \citenamefont {Ac\'{\i}n}}]{DMA2007}%
  \BibitemOpen
  \bibfield  {author} {\bibinfo {author} {\bibfnamefont {D.}~\bibnamefont {Cavalcanti}}, \bibinfo {author} {\bibfnamefont {M.~O.}\ \bibnamefont {Terra~Cunha}},\ and\ \bibinfo {author} {\bibfnamefont {A.}~\bibnamefont {Ac\'{\i}n}},\ }\bibfield  {title} {\bibinfo {title} {Multipartite entanglement of superpositions},\ }\href {https://doi.org/10.1103/PhysRevA.76.042329} {\bibfield  {journal} {\bibinfo  {journal} {Phys. Rev. A}\ }\textbf {\bibinfo {volume} {76}},\ \bibinfo {pages} {042329} (\bibinfo {year} {2007})}\BibitemShut {NoStop}%
\bibitem [{\citenamefont {Song}\ \emph {et~al.}(2007)\citenamefont {Song}, \citenamefont {Liu},\ and\ \citenamefont {Chen}}]{WNZ2007}%
  \BibitemOpen
  \bibfield  {author} {\bibinfo {author} {\bibfnamefont {W.}~\bibnamefont {Song}}, \bibinfo {author} {\bibfnamefont {N.~L.}\ \bibnamefont {Liu}},\ and\ \bibinfo {author} {\bibfnamefont {Z.~B.}\ \bibnamefont {Chen}},\ }\bibfield  {title} {\bibinfo {title} {Bounds on the multipartite entanglement of superpositions},\ }\href {https://doi.org/10.1103/PhysRevA.76.054303} {\bibfield  {journal} {\bibinfo  {journal} {Phys. Rev. A}\ }\textbf {\bibinfo {volume} {76}},\ \bibinfo {pages} {054303} (\bibinfo {year} {2007})}\BibitemShut {NoStop}%
\bibitem [{\citenamefont {Gour}\ and\ \citenamefont {Roy}(2008)}]{GA2008}%
  \BibitemOpen
  \bibfield  {author} {\bibinfo {author} {\bibfnamefont {G.}~\bibnamefont {Gour}}\ and\ \bibinfo {author} {\bibfnamefont {A.}~\bibnamefont {Roy}},\ }\bibfield  {title} {\bibinfo {title} {Entanglement of subspaces in terms of entanglement of superpositions},\ }\href {https://doi.org/10.1103/PhysRevA.77.012336} {\bibfield  {journal} {\bibinfo  {journal} {Phys. Rev. A}\ }\textbf {\bibinfo {volume} {77}},\ \bibinfo {pages} {012336} (\bibinfo {year} {2008})}\BibitemShut {NoStop}%
\bibitem [{\citenamefont {Osterloh}\ \emph {et~al.}(2008)\citenamefont {Osterloh}, \citenamefont {Siewert},\ and\ \citenamefont {Uhlmann}}]{AJA2008}%
  \BibitemOpen
  \bibfield  {author} {\bibinfo {author} {\bibfnamefont {A.}~\bibnamefont {Osterloh}}, \bibinfo {author} {\bibfnamefont {J.}~\bibnamefont {Siewert}},\ and\ \bibinfo {author} {\bibfnamefont {A.}~\bibnamefont {Uhlmann}},\ }\bibfield  {title} {\bibinfo {title} {Tangles of superpositions and the convex-roof extension},\ }\href {https://doi.org/10.1103/PhysRevA.77.032310} {\bibfield  {journal} {\bibinfo  {journal} {Phys. Rev. A}\ }\textbf {\bibinfo {volume} {77}},\ \bibinfo {pages} {032310} (\bibinfo {year} {2008})}\BibitemShut {NoStop}%
\bibitem [{\citenamefont {Ghiu}(2009)}]{GHIU2009}%
  \BibitemOpen
  \bibfield  {author} {\bibinfo {author} {\bibfnamefont {I.}~\bibnamefont {Ghiu}},\ }\bibfield  {title} {\bibinfo {title} {Local transformations of superpositions of entangled states},\ }\href {https://doi.org/10.1016/j.physleta.2009.01.021} {\bibfield  {journal} {\bibinfo  {journal} {Physics Letters A}\ }\textbf {\bibinfo {volume} {373}},\ \bibinfo {pages} {922} (\bibinfo {year} {2009})}\BibitemShut {NoStop}%
\bibitem [{\citenamefont {Ma}\ \emph {et~al.}(2010)\citenamefont {Ma}, \citenamefont {Yu},\ and\ \citenamefont {Song}}]{KCH2010}%
  \BibitemOpen
  \bibfield  {author} {\bibinfo {author} {\bibfnamefont {K.}~\bibnamefont {Ma}}, \bibinfo {author} {\bibfnamefont {C.~S.}\ \bibnamefont {Yu}},\ and\ \bibinfo {author} {\bibfnamefont {H.~S.}\ \bibnamefont {Song}},\ }\bibfield  {title} {\bibinfo {title} {A tight bound on negativity of superpositions},\ }\href {https://doi.org/10.1140/epjd/e2010-00150-4} {\bibfield  {journal} {\bibinfo  {journal} {Eur. Phys. J. D}\ }\textbf {\bibinfo {volume} {59}},\ \bibinfo {pages} {317} (\bibinfo {year} {2010})}\BibitemShut {NoStop}%
\bibitem [{\citenamefont {Zhang}\ \emph {et~al.}(2010)\citenamefont {Zhang}, \citenamefont {Zhou},\ and\ \citenamefont {Fan}}]{ZZF2010}%
  \BibitemOpen
  \bibfield  {author} {\bibinfo {author} {\bibfnamefont {D.~H.}\ \bibnamefont {Zhang}}, \bibinfo {author} {\bibfnamefont {D.~L.}\ \bibnamefont {Zhou}},\ and\ \bibinfo {author} {\bibfnamefont {H.}~\bibnamefont {Fan}},\ }\bibfield  {title} {\bibinfo {title} {Entanglement of superpositions of orthogonal maximally entangled states},\ }\href {https://doi.org/10.1088/0256-307X/27/9/090306} {\bibfield  {journal} {\bibinfo  {journal} {Chin. Phys. Lett.}\ }\textbf {\bibinfo {volume} {27}},\ \bibinfo {pages} {090306} (\bibinfo {year} {2010})}\BibitemShut {NoStop}%
\bibitem [{\citenamefont {Akhtarshenas}(2011)}]{S2011}%
  \BibitemOpen
  \bibfield  {author} {\bibinfo {author} {\bibfnamefont {S.~J.}\ \bibnamefont {Akhtarshenas}},\ }\bibfield  {title} {\bibinfo {title} {Concurrence of superpositions of many states},\ }\href {https://doi.org/10.1103/PhysRevA.83.042306} {\bibfield  {journal} {\bibinfo  {journal} {Phys. Rev. A}\ }\textbf {\bibinfo {volume} {83}},\ \bibinfo {pages} {042306} (\bibinfo {year} {2011})}\BibitemShut {NoStop}%
\bibitem [{\citenamefont {Parashar}\ and\ \citenamefont {Rana}(2011)}]{PS2011}%
  \BibitemOpen
  \bibfield  {author} {\bibinfo {author} {\bibfnamefont {P.}~\bibnamefont {Parashar}}\ and\ \bibinfo {author} {\bibfnamefont {S.}~\bibnamefont {Rana}},\ }\bibfield  {title} {\bibinfo {title} {Entanglement and discord of the superposition of greenberger-horne-zeilinger states},\ }\href {https://doi.org/10.1103/PhysRevA.83.032301} {\bibfield  {journal} {\bibinfo  {journal} {Phys. Rev. A}\ }\textbf {\bibinfo {volume} {83}},\ \bibinfo {pages} {032301} (\bibinfo {year} {2011})}\BibitemShut {NoStop}%
\bibitem [{\citenamefont {Bhar}\ \emph {et~al.}(2013)\citenamefont {Bhar}, \citenamefont {Sen},\ and\ \citenamefont {Sarkar}}]{AAD2011}%
  \BibitemOpen
  \bibfield  {author} {\bibinfo {author} {\bibfnamefont {A.}~\bibnamefont {Bhar}}, \bibinfo {author} {\bibfnamefont {A.}~\bibnamefont {Sen}},\ and\ \bibinfo {author} {\bibfnamefont {D.}~\bibnamefont {Sarkar}},\ }\bibfield  {title} {\bibinfo {title} {Character of superposed states under deterministic locc},\ }\href {https://doi.org/10.1007/s11128-012-0412-9} {\bibfield  {journal} {\bibinfo  {journal} {Quantum Inf. Process.}\ }\textbf {\bibinfo {volume} {12}},\ \bibinfo {pages} {721} (\bibinfo {year} {2013})}\BibitemShut {NoStop}%
\bibitem [{\citenamefont {Ma}\ \emph {et~al.}(2014)\citenamefont {Ma}, \citenamefont {Chen},\ and\ \citenamefont {Fei}}]{ZZS2014}%
  \BibitemOpen
  \bibfield  {author} {\bibinfo {author} {\bibfnamefont {Z.}~\bibnamefont {Ma}}, \bibinfo {author} {\bibfnamefont {Z.}~\bibnamefont {Chen}},\ and\ \bibinfo {author} {\bibfnamefont {S.~M.}\ \bibnamefont {Fei}},\ }\bibfield  {title} {\bibinfo {title} {Genuine multipartite entanglement of superpositions},\ }\href {https://doi.org/10.1103/PhysRevA.90.032307} {\bibfield  {journal} {\bibinfo  {journal} {Phys. Rev. A}\ }\textbf {\bibinfo {volume} {90}},\ \bibinfo {pages} {032307} (\bibinfo {year} {2014})}\BibitemShut {NoStop}%
\bibitem [{\citenamefont {Yu}\ \emph {et~al.}(2016)\citenamefont {Yu}, \citenamefont {Shao},\ and\ \citenamefont {Li}}]{CTD2016}%
  \BibitemOpen
  \bibfield  {author} {\bibinfo {author} {\bibfnamefont {C.~S.}\ \bibnamefont {Yu}}, \bibinfo {author} {\bibfnamefont {T.~T.}\ \bibnamefont {Shao}},\ and\ \bibinfo {author} {\bibfnamefont {D.~M.}\ \bibnamefont {Li}},\ }\bibfield  {title} {\bibinfo {title} {Distribution of standard deviation of an observable among superposed states},\ }\href {https://doi.org/10.1016/j.aop.2016.06.017} {\bibfield  {journal} {\bibinfo  {journal} {Ann. Phys. (N. Y.)}\ }\textbf {\bibinfo {volume} {373}},\ \bibinfo {pages} {43} (\bibinfo {year} {2016})}\BibitemShut {NoStop}%
\bibitem [{\citenamefont {Yue}\ \emph {et~al.}(2016)\citenamefont {Yue}, \citenamefont {Yu}, \citenamefont {Liu}, \citenamefont {Wang}, \citenamefont {Gao},\ and\ \citenamefont {Wen}}]{QCFXFQ2018}%
  \BibitemOpen
  \bibfield  {author} {\bibinfo {author} {\bibfnamefont {Q.~L.}\ \bibnamefont {Yue}}, \bibinfo {author} {\bibfnamefont {C.~H.}\ \bibnamefont {Yu}}, \bibinfo {author} {\bibfnamefont {F.}~\bibnamefont {Liu}}, \bibinfo {author} {\bibfnamefont {X.~L.}\ \bibnamefont {Wang}}, \bibinfo {author} {\bibfnamefont {F.}~\bibnamefont {Gao}},\ and\ \bibinfo {author} {\bibfnamefont {Q.~Y.}\ \bibnamefont {Wen}},\ }\bibfield  {title} {\bibinfo {title} {Coherence of superpositions},\ }\href {https://doi.org/10.48550/arXiv.1605.04067} {\bibfield  {journal} {\bibinfo  {journal} {arxiv:1605.04067}\ } (\bibinfo {year} {2016})}\BibitemShut {NoStop}%
\bibitem [{\citenamefont {Yuwen}\ \emph {et~al.}(2019)\citenamefont {Yuwen}, \citenamefont {Shao},\ and\ \citenamefont {Xi}}]{SLZ2019}%
  \BibitemOpen
  \bibfield  {author} {\bibinfo {author} {\bibfnamefont {S.~S.}\ \bibnamefont {Yuwen}}, \bibinfo {author} {\bibfnamefont {L.~H.}\ \bibnamefont {Shao}},\ and\ \bibinfo {author} {\bibfnamefont {Z.~J.}\ \bibnamefont {Xi}},\ }\bibfield  {title} {\bibinfo {title} {Coherence of superposition states},\ }\href {https://doi.org/10.1088/0253-6102/71/9/1084} {\bibfield  {journal} {\bibinfo  {journal} {CTP}\ }\textbf {\bibinfo {volume} {71}},\ \bibinfo {pages} {1084} (\bibinfo {year} {2019})}\BibitemShut {NoStop}%
\bibitem [{\citenamefont {Shao}\ \emph {et~al.}(2020)\citenamefont {Shao}, \citenamefont {Li},\ and\ \citenamefont {Yu}}]{TDC2019}%
  \BibitemOpen
  \bibfield  {author} {\bibinfo {author} {\bibfnamefont {T.~T.}\ \bibnamefont {Shao}}, \bibinfo {author} {\bibfnamefont {D.~M.}\ \bibnamefont {Li}},\ and\ \bibinfo {author} {\bibfnamefont {C.~S.}\ \bibnamefont {Yu}},\ }\bibfield  {title} {\bibinfo {title} {The bounds of fisher information induced by the superposed input states},\ }\href {https://doi.org/10.1007/s11128-019-2505-1} {\bibfield  {journal} {\bibinfo  {journal} {Quantum Inf. Process.}\ }\textbf {\bibinfo {volume} {19}},\ \bibinfo {pages} {11} (\bibinfo {year} {2020})}\BibitemShut {NoStop}%
\bibitem [{\citenamefont {Chefles}(2004)}]{CHEFLES2004}%
  \BibitemOpen
  \bibfield  {author} {\bibinfo {author} {\bibfnamefont {A.}~\bibnamefont {Chefles}},\ }\bibfield  {title} {\bibinfo {title} {Condition for unambiguous state discrimination using local operations and classical communication},\ }\href {https://doi.org/10.1103/PhysRevA.69.050307} {\bibfield  {journal} {\bibinfo  {journal} {Phys. Rev. A}\ }\textbf {\bibinfo {volume} {69}},\ \bibinfo {pages} {050307(R)} (\bibinfo {year} {2004})}\BibitemShut {NoStop}%
\bibitem [{\citenamefont {Duan}\ \emph {et~al.}(2007)\citenamefont {Duan}, \citenamefont {Feng}, \citenamefont {Ji},\ and\ \citenamefont {Ying}}]{Duan07}%
  \BibitemOpen
  \bibfield  {author} {\bibinfo {author} {\bibfnamefont {R.}~\bibnamefont {Duan}}, \bibinfo {author} {\bibfnamefont {Y.}~\bibnamefont {Feng}}, \bibinfo {author} {\bibfnamefont {Z.}~\bibnamefont {Ji}},\ and\ \bibinfo {author} {\bibfnamefont {M.}~\bibnamefont {Ying}},\ }\bibfield  {title} {\bibinfo {title} {Distinguishing arbitrary multipartite basis unambiguously using local operations and classical communication},\ }\href {https://doi.org/10.1103/PhysRevLett.98.230502} {\bibfield  {journal} {\bibinfo  {journal} {Phys. Rev. Lett.}\ }\textbf {\bibinfo {volume} {98}},\ \bibinfo {pages} {230502} (\bibinfo {year} {2007})}\BibitemShut {NoStop}%
\bibitem [{\citenamefont {Bandyopadhyay}\ and\ \citenamefont {Walgate}(2009)}]{BJ2009}%
  \BibitemOpen
  \bibfield  {author} {\bibinfo {author} {\bibfnamefont {S.}~\bibnamefont {Bandyopadhyay}}\ and\ \bibinfo {author} {\bibfnamefont {J.}~\bibnamefont {Walgate}},\ }\bibfield  {title} {\bibinfo {title} {Local distinguishability of any three quantum states},\ }\href {https://doi.org/10.1088/1751-8113/42/7/072002} {\bibfield  {journal} {\bibinfo  {journal} {J. Phys. A-Math.}\ }\textbf {\bibinfo {volume} {42}},\ \bibinfo {pages} {072002} (\bibinfo {year} {2009})}\BibitemShut {NoStop}%
\bibitem [{\citenamefont {Cohen}(2015)}]{Cohen15}%
  \BibitemOpen
  \bibfield  {author} {\bibinfo {author} {\bibfnamefont {S.~M.}\ \bibnamefont {Cohen}},\ }\bibfield  {title} {\bibinfo {title} {Class of unambiguous state discrimination problems achievable by separable measurements but impossible by local operations and classical communication},\ }\href {https://doi.org/10.1103/PhysRevA.91.012321} {\bibfield  {journal} {\bibinfo  {journal} {Phys. Rev. A}\ }\textbf {\bibinfo {volume} {91}},\ \bibinfo {pages} {012321} (\bibinfo {year} {2015})}\BibitemShut {NoStop}%
\bibitem [{\citenamefont {Ha}\ and\ \citenamefont {Kim}(2021)}]{Ha21}%
  \BibitemOpen
  \bibfield  {author} {\bibinfo {author} {\bibfnamefont {D.}~\bibnamefont {Ha}}\ and\ \bibinfo {author} {\bibfnamefont {J.~S.}\ \bibnamefont {Kim}},\ }\bibfield  {title} {\bibinfo {title} {Quantum nonlocality without entanglement depending on nonzero prior probabilities in optimal unambiguous discrimination},\ }\href {https://doi.org/10.1038/s41598-021-97103-y} {\bibfield  {journal} {\bibinfo  {journal} {Sci. Rep.}\ }\textbf {\bibinfo {volume} {11}},\ \bibinfo {pages} {17695} (\bibinfo {year} {2021})}\BibitemShut {NoStop}%
\bibitem [{\citenamefont {Halder}\ and\ \citenamefont {Sen}(2022)}]{SU2022}%
  \BibitemOpen
  \bibfield  {author} {\bibinfo {author} {\bibfnamefont {S.}~\bibnamefont {Halder}}\ and\ \bibinfo {author} {\bibfnamefont {U.}~\bibnamefont {Sen}},\ }\bibfield  {title} {\bibinfo {title} {Unextendible entangled bases and more nonlocality with less entanglement},\ }\href {https://doi.org/10.1103/PhysRevA.105.L030401} {\bibfield  {journal} {\bibinfo  {journal} {Phys. Rev. A}\ }\textbf {\bibinfo {volume} {105}},\ \bibinfo {pages} {L030401} (\bibinfo {year} {2022})}\BibitemShut {NoStop}%
\bibitem [{\citenamefont {Schmidt}(1907)}]{Schmidt1907}%
  \BibitemOpen
  \bibfield  {author} {\bibinfo {author} {\bibfnamefont {E.}~\bibnamefont {Schmidt}},\ }\bibfield  {title} {\bibinfo {title} {Zur theorie der linearen und nichtlinearen integralgleichungen},\ }\href {https://doi.org/10.1007/BF01449770} {\bibfield  {journal} {\bibinfo  {journal} {Math. Ann.}\ }\textbf {\bibinfo {volume} {63}},\ \bibinfo {pages} {433} (\bibinfo {year} {1907})}\BibitemShut {NoStop}%
\bibitem [{\citenamefont {{Ekert}}\ and\ \citenamefont {{Knight}}(1995)}]{AK1995}%
  \BibitemOpen
  \bibfield  {author} {\bibinfo {author} {\bibfnamefont {A.}~\bibnamefont {{Ekert}}}\ and\ \bibinfo {author} {\bibfnamefont {P.~L.}\ \bibnamefont {{Knight}}},\ }\bibfield  {title} {\bibinfo {title} {Entangled quantum systems and the schmidt decomposition},\ }\href {https://doi.org/10.1119/1.17904} {\bibfield  {journal} {\bibinfo  {journal} {Am. J. Phys.}\ }\textbf {\bibinfo {volume} {63}},\ \bibinfo {pages} {415} (\bibinfo {year} {1995})}\BibitemShut {NoStop}%
\bibitem [{\citenamefont {Peres}(1995)}]{A1995}%
  \BibitemOpen
  \bibfield  {author} {\bibinfo {author} {\bibfnamefont {A.}~\bibnamefont {Peres}},\ }\bibfield  {title} {\bibinfo {title} {Higher order schmidt decompositions},\ }\href {https://doi.org/10.1016/0375-9601(95)00315-T} {\bibfield  {journal} {\bibinfo  {journal} {Phys. Lett. A}\ }\textbf {\bibinfo {volume} {202}},\ \bibinfo {pages} {16} (\bibinfo {year} {1995})}\BibitemShut {NoStop}%
\bibitem [{\citenamefont {Chen}\ \emph {et~al.}(2010)\citenamefont {Chen}, \citenamefont {Chitambar}, \citenamefont {Duan}, \citenamefont {Ji},\ and\ \citenamefont {Winter}}]{LERZA2010}%
  \BibitemOpen
  \bibfield  {author} {\bibinfo {author} {\bibfnamefont {L.}~\bibnamefont {Chen}}, \bibinfo {author} {\bibfnamefont {E.}~\bibnamefont {Chitambar}}, \bibinfo {author} {\bibfnamefont {R.}~\bibnamefont {Duan}}, \bibinfo {author} {\bibfnamefont {Z.}~\bibnamefont {Ji}},\ and\ \bibinfo {author} {\bibfnamefont {A.}~\bibnamefont {Winter}},\ }\bibfield  {title} {\bibinfo {title} {Tensor rank and stochastic entanglement catalysis for multipartite pure states},\ }\href {https://doi.org/10.1103/PhysRevLett.105.200501} {\bibfield  {journal} {\bibinfo  {journal} {Phys. Rev. Lett.}\ }\textbf {\bibinfo {volume} {105}},\ \bibinfo {pages} {200501} (\bibinfo {year} {2010})}\BibitemShut {NoStop}%
\bibitem [{\citenamefont {Håstad}(1990)}]{HASTAD1990644}%
  \BibitemOpen
  \bibfield  {author} {\bibinfo {author} {\bibfnamefont {J.}~\bibnamefont {Håstad}},\ }\bibfield  {title} {\bibinfo {title} {Tensor rank is np-complete},\ }\href {https://doi.org/https://doi.org/10.1016/0196-6774(90)90014-6} {\bibfield  {journal} {\bibinfo  {journal} {Journal of Algorithms}\ }\textbf {\bibinfo {volume} {11}},\ \bibinfo {pages} {644} (\bibinfo {year} {1990})}\BibitemShut {NoStop}%
\bibitem [{\citenamefont {Hillar}\ and\ \citenamefont {Lim}(2013)}]{Hiller13}%
  \BibitemOpen
  \bibfield  {author} {\bibinfo {author} {\bibfnamefont {C.~J.}\ \bibnamefont {Hillar}}\ and\ \bibinfo {author} {\bibfnamefont {L.-H.}\ \bibnamefont {Lim}},\ }\bibfield  {title} {\bibinfo {title} {Most tensor problems are np-hard},\ }\href {https://doi.org/10.1145/2512329} {\bibfield  {journal} {\bibinfo  {journal} {J. ACM}\ }\textbf {\bibinfo {volume} {60}},\ \bibinfo {pages} {39} (\bibinfo {year} {2013})}\BibitemShut {NoStop}%
\bibitem [{\citenamefont {D\"ur}\ \emph {et~al.}(2000)\citenamefont {D\"ur}, \citenamefont {Vidal},\ and\ \citenamefont {Cirac}}]{DVC00}%
  \BibitemOpen
  \bibfield  {author} {\bibinfo {author} {\bibfnamefont {W.}~\bibnamefont {D\"ur}}, \bibinfo {author} {\bibfnamefont {G.}~\bibnamefont {Vidal}},\ and\ \bibinfo {author} {\bibfnamefont {J.~I.}\ \bibnamefont {Cirac}},\ }\bibfield  {title} {\bibinfo {title} {Three qubits can be entangled in two inequivalent ways},\ }\href {https://doi.org/10.1103/PhysRevA.62.062314} {\bibfield  {journal} {\bibinfo  {journal} {Phys. Rev. A}\ }\textbf {\bibinfo {volume} {62}},\ \bibinfo {pages} {062314} (\bibinfo {year} {2000})}\BibitemShut {NoStop}%
\bibitem [{\citenamefont {Halder}\ and\ \citenamefont {Sen}(2021)}]{HALDER2021168550}%
  \BibitemOpen
  \bibfield  {author} {\bibinfo {author} {\bibfnamefont {S.}~\bibnamefont {Halder}}\ and\ \bibinfo {author} {\bibfnamefont {U.}~\bibnamefont {Sen}},\ }\bibfield  {title} {\bibinfo {title} {Local indistinguishability and incompleteness of entangled orthogonal bases: Method to generate two-element locally indistinguishable ensembles},\ }\href {https://doi.org/10.1016/j.aop.2021.168550} {\bibfield  {journal} {\bibinfo  {journal} {Ann. Phys. (N. Y.)}\ }\textbf {\bibinfo {volume} {431}},\ \bibinfo {pages} {168550} (\bibinfo {year} {2021})}\BibitemShut {NoStop}%
\bibitem [{\citenamefont {Bravyi}\ and\ \citenamefont {Smolin}(2011)}]{SJ2011}%
  \BibitemOpen
  \bibfield  {author} {\bibinfo {author} {\bibfnamefont {S.}~\bibnamefont {Bravyi}}\ and\ \bibinfo {author} {\bibfnamefont {J.~A.}\ \bibnamefont {Smolin}},\ }\bibfield  {title} {\bibinfo {title} {Unextendible maximally entangled bases},\ }\href {https://doi.org/10.1103/PhysRevA.84.042306} {\bibfield  {journal} {\bibinfo  {journal} {Phys. Rev. A}\ }\textbf {\bibinfo {volume} {84}},\ \bibinfo {pages} {042306} (\bibinfo {year} {2011})}\BibitemShut {NoStop}%
\bibitem [{\citenamefont {Contreras-Tejada}\ \emph {et~al.}(2019)\citenamefont {Contreras-Tejada}, \citenamefont {Palazuelos},\ and\ \citenamefont {de~Vicente}}]{PCJ19}%
  \BibitemOpen
  \bibfield  {author} {\bibinfo {author} {\bibfnamefont {P.}~\bibnamefont {Contreras-Tejada}}, \bibinfo {author} {\bibfnamefont {C.}~\bibnamefont {Palazuelos}},\ and\ \bibinfo {author} {\bibfnamefont {J.~I.}\ \bibnamefont {de~Vicente}},\ }\bibfield  {title} {\bibinfo {title} {Resource theory of entanglement with a unique multipartite maximally entangled state},\ }\href {https://doi.org/10.1103/PhysRevLett.122.120503} {\bibfield  {journal} {\bibinfo  {journal} {Phys. Rev. Lett.}\ }\textbf {\bibinfo {volume} {122}},\ \bibinfo {pages} {120503} (\bibinfo {year} {2019})}\BibitemShut {NoStop}%
\end{thebibliography}%
\end{document}